\crefname{equation}{}{}
\Crefname{equation}{}{}
\crefname{definition}{\textbf{definition}}{definitions}
\Crefname{definition}{Definition}{Definitions}
\crefname{assumption}{\textbf{assumption}}{assumptions}
\Crefname{assumption}{Assumption}{Assumptions}
\definecolor{maroon}{RGB}{192,80,77}
\newtheorem{theorem}{Theorem}
\newtheorem{lemma}{Lemma}
\newtheorem{proposition}{Proposition}
\newtheorem{definition}{Definition}
\def\sign{\mathrm{sign}}
\newenvironment{pfsketch}{%
  \proof}{\endproof}
\def\sign{\mathrm{sign}}
\newcommand{\adv}{majority}
\newcommand{\dis}{minority}
\newcommand{\A}{A}
\newcommand{\D}{D}
\newcommand{\vecl}[1]{\mathbf{#1}}
\renewcommand{\vec}[1]{\boldsymbol{#1}}
\newcommand{\xhdr}[1]{\vspace{1mm} \noindent\textbf{#1.}}
\title{Moral Machine or Tyranny of the Majority?}
\begin{document}

\author{Michael Feffer}
\author{Hoda Heidari\footnote{These authors contributed equally.}}
\newcommand\CoAuthorMark{\footnotemark[\arabic{footnote}]} 
\author{Zachary C. Lipton\protect\CoAuthorMark}
\affil{
    Carnegie Mellon University\\
   \{\href{mailto:mfeffer@andrew.cmu.edu}{\nolinkurl{mfeffer}}, \href{mailto:hheidari@andrew.cmu.edu}{\nolinkurl{hheidari}},
   \href{mailto:zlipton@andrew.cmu.edu}{\nolinkurl{zlipton}}\}@andrew.cmu.edu
}


 
 
 
 

\date{}

\maketitle

\begin{abstract}
With Artificial Intelligence systems increasingly 
applied in consequential domains,
researchers have begun to ask how these systems 
ought to act in ethically charged situations
where even humans lack consensus. 
In the Moral Machine project, researchers crowdsourced answers 
to ``Trolley Problems'' concerning autonomous vehicles.
Subsequently, 
Noothigattu et al. (2018)
proposed 
inferring linear functions that approximate 
each individual's preferences
and aggregating these linear models 
by averaging parameters across the population.
In this paper, we examine this \emph{averaging} mechanism,
focusing on fairness concerns in the presence of strategic effects.
We investigate a simple setting
where the population consists of two groups,
with the minority constituting an $\alpha < 0.5$ 
share of the population.
To simplify the analysis,
we consider the extreme case in which
within-group preferences are homogeneous.
Focusing on the fraction of contested cases
where the minority group prevails,
we make the following observations:
(a) even when all parties report their preferences truthfully,
the fraction of disputes where the minority prevails
is less than proportionate in $\alpha$;
(b) the degree of sub-proportionality grows more severe
as the level of disagreement between the groups increases;
(c) when parties report preferences strategically,
pure strategy equilibria do not always exist;
and (d) whenever a pure strategy equilibrium exists,
the majority group prevails 100\% of the time. 
These findings raise concerns 
about stability and fairness 
of preference vector averaging as a mechanism 
for aggregating diverging voices.
Finally, we discuss alternatives,
including randomized dictatorship 
and median-based mechanisms. 

\end{abstract}

\section{Introduction}
\label{sec:intro}
Machine learning (ML) has increasingly been employed
to automate decisions in consequential domains 
including autonomous vehicles, healthcare, 
hiring, finance, and criminal justice.
These domains present many ethically charged decisions,
where even knowledgeable humans may lack consensus
about the right course of action.
Consequently, AI researchers have been forced 
to consider how to resolve normative disputes
when competing values come into conflict.
The formal study of such ethical quandaries 
long predates the advent of modern AI systems.
For example, philosophers have long debated
Trolley Problems \citep{thomson1985trolley},
which generally take the form of inescapable decisions 
among normatively undesirable alternatives.
Notably, these problems typically lack clear-cut answers,
and people's judgments are often sensitive 
to subtle details in the provided context. 

Questions about whose values are represented
and what objectives are optimized 
in ML-based systems
have become especially salient 
in light of documented instances 
of algorithmic bias in deployed systems
\citep{angwin2016machine, buolamwini2018gender, obermeyer2019dissecting}. 
Faced with questions of whose values ought to prevail
when a judgment must be made,
some AI ethics researchers have advocated 
\emph{participatory machine learning},
a family of methods for democratizing decisions
by incorporating the views of a variety of stakeholders
\citep{lee2019webuildai, ilvento2019metric, jung2019algorithmic}.
For example, a line of papers on \emph{preference elicitation}
tasks stakeholders with choosing among sets of alternatives
\citep{ilvento2019metric, lee2019webuildai, jung2019algorithmic, hiranandani2020fair, hiranandani2020quadratic, freedman2020adapting}.
In many of these studies, the hope is 
to compute a socially aligned objective function 
\citep{lee2019webuildai, freedman2020adapting}
or fairness metric \citep{ilvento2019metric, hiranandani2020quadratic, hiranandani2020fair}.

In a pioneering study,
\citet{awad2018moral} introduced the Moral Machine,
a large-scale crowdsourcing study 
in which millions of participants from around the world
were presented 
with autonomous driving scenarios
in the style of Trolley Problems. 
Participants were shown images 
depicting two possible outcomes 
and asked which alternative they preferred.
In one scenario, the first alternative 
might be to collide with a barrier 
and sacrifice several young passengers, 
and the second to pulverize 
two elderly pedestrians in a crosswalk. 
Utilizing the Moral Machine dataset,
\citet{noothigattu2018voting} 
proposed methods for inferring 
each participant's preferences.
Specifically, they represent each alternative 
by a fixed-length vector of attributes, $\vecl{x}$, 
and each participant's scoring function
as the dot product between their \emph{preference vector} $\vec{\theta}$
and the alternative $\vecl{x}$.
The objective is to infer parameters $\vec{\theta}$ 
so that whenever an individual 
prefers one alternative over another,
the preferred alternative 
receives a higher score.
To aggregate these preference vectors across a population, 
\citet{noothigattu2018voting} propose 
to simply \emph{average} them.
Faced with a ``moral dilemma'', an autonomous vehicle
would conceivably featurize each alternative,
compute their dot products 
with the \emph{aggregate preference vector},
and choose the highest scoring option.
These approaches have been echoed
in a growing body of follow-up studies
\citep{pugnetti2018customer, wang2019privacy,kemmer2020enhancing}.
However, despite the work's influence, 
key properties of the proposed mechanism remain under-explored.

In this paper, we analyze the mechanism 
of averaging preference vectors across a population, 
focusing on stability, strategic effects,
and implications for fairness.
While averaging mechanisms
have been extensively studied \citep{hurley2002combining,renault2005protecting,marchese2011strategy,renault2011assessing},
\citet{noothigattu2018voting}'s approach
warrants analysis for several reasons:
(i) here, individuals vote on the parameters of a ranking algorithm,
rather than directly on the outcomes of interest
and (ii) since only the ordering induced 
by the preference vectors matters,
only the direction of the aggregate vector is relevant.
We introduce a stylized model 
in which the population of interest consists 
of a (disadvantaged) minority group,
constituting an $\alpha < 0.5$ share of the population,
and an (advantaged) majority group constituting $1-\alpha$.
The goal is to determine how an autonomous vehicle should behave
by polling the population via \citet{noothigattu2018voting}'s
algorithm, where the problem setup is identical to that of the Moral Machine \cite{awad2018moral}.
Because we only
observe 
which alternative an individual prefers,
our analysis focuses on the fraction of cases
(among \emph{disputed cases})
where each group prevails.

To clarify the fairness properties of the mechanism,
we concentrate our analysis on an stylized setting 
in which within-group preferences are homogeneous.
We emphasize this assumption is only meant for simplifying the analysis and revealing fundamental limitations of the simple preference vector averaging mechanism.
The problems we identify here do not simply disappear 
in more complicated settings with within-group variation in preferences. 
Moreover, to isolate the role of the aggregation mechanism,
we assume participants
can directly report their preference vectors. 
With these assumptions in place, our analysis makes the following key observations:
(i) even when preferences are reported truthfully, 
the fraction of cases where the minority prevails 
is sub-proportionate (i.e., less than $\alpha$);
(ii) the degree of sub-proportionality grows more severe 
when the divergence between the two groups' preference vectors is large;
(iii) as with most averaging-based approaches,
    this mechanism is not strategy-proof;
(iv) whenever a pure strategy equilibrium exists,
    the {\adv} group prevails on 100\% of cases;
and (v) last but not least, while other stable and incentive-compatible mechanisms do exist 
(e.g., the randomized dictatorship model), 
they come with other fundamental shortcomings.

Several takeaways flow from our analysis.
First, this averaging of preference vectors 
is qualitatively different from averaging 
votes directly on outcomes of interest,
giving rise to instability 
and surprising strategic behavior.
Second, the degree of compromise 
among majority and minority demographics,
both under truthful and strategic settings,
is an important consideration when designing aggregation mechanisms 
to support participatory ML systems.
Finally, our work raises critical questions regarding the limitations of
simple voting methods to ensure stakeholders' participation in the design of high-stakes automated decision-making systems.
We hope that this work encourages the community
to reflect on the importance of addressing normative disagreements among stakeholders---not 
simply by passing them through an aggregation mechanism, 
but by effectively giving voice to the disadvantaged communities
and facilitating deliberations necessary to reach an acceptable outcome for all stakeholders.

\section{Related Work}
\label{sec:related}
Our work draws on the preference elicitation 
and computational social choice literatures.
We build most directly on a line of work
consisting of the Moral Machine \citep{awad2018moral}
and subsequently proposed procedures 
for inferring and aggregating preferences \citep{noothigattu2018voting}.
These studies inspired numerous follow-up articles, 
such as \citet{pugnetti2018customer},
who pose the same questions to Swiss vehicle customers;
\citet{wang2019privacy}, who modify the algorithms 
to support differential privacy; 
and \citet{kemmer2020enhancing}, who evaluate
various methods of aggregating crowdsourced judgments, including averaging.

\paragraph{Preference elicitation for participatory ML}
\citet{lee2019webuildai} helped nonprofit volunteers
build ML algorithms via pairwise comparisons 
and aggregating the resulting preferences 
using Borda Count voting. 
\citet{johnstonpreference} adopted participatory mechanisms 
to determine how to allocate COVID-19 triage supplies. 
\citet{freedman2020adapting} modified parameter weights 
in their kidney exchange linear programs to help break ties 
based on inferred participant preferences 
about who should receive kidney donations.
Still other works that learn fairness metrics from user input
assume a single participant 
or a group capable of coming to consensus, 
and therefore employ no aggregation mechanism 
\citep{ilvento2019metric, hiranandani2020quadratic, hiranandani2020fair}.
Note that, per \citet{chamberlin1985investigation},
that using Borda Count instead of averaging 
does not necessarily alleviate issues
related to strategic voting.
Similarly, a minimax group regret approach
as employed in \citet{johnstonpreference}
may also be vulnerable to subgroup strategic voting. 

\paragraph{Computational Social Choice  }
Like us, \citet{el2021strategyproofness} highlights 
the general susceptibility of averaging-based methods
to strategic voting and discusses alternative 
median-based mechanisms.
While they mention the Moral Machine as an example,
they do not analyze the particular mechanism
presented in \citep{noothigattu2018voting} 
or provide any of the insights 
about fairness and strategic concerns
presented in our work.
\citet{moulin1980strategy, conitzer2016rules, zhang2019better}
propose median-based algorithms for voting in a strategy-proof manner
in the context of crowdsourcing societal tradeoffs.
\citet{brill2015strategic} address strategic voting,
noting that median-based approaches 
are less susceptible to manipulation.
\citet{conitzer2015crowdsourcing} describes issues 
with crowdsourcing societal tradeoffs more generally. 
Concerning randomized dictatorship models,
\citet{gibbard1973manipulation, gibbard1977manipulation} 
introduce randomized solutions as ``unattractive'' 
yet strategy-proof approaches.
\citet{zeckhauser1973voting} similarly states 
that randomized dictatorship 
has some favorable characteristics
in that it forces voters 
to report their true preferences 
(i.e., is strategy-proof) 
and is \emph{probabilistically linear} 
(i.e., switching votes from one alternative to another 
only affects the selection probabilities 
of those two alternatives in a linear fashion).
Instead of proposing other aggregation mechanisms, both \citet{landemore2015deliberation} and \citet{pierson2017demographics}
argue that deliberation and discussion among participants
are procedures that, when used in conjunction with voting, can lead to better outcomes than voting as a standalone process.

\section{Our Problem Setup}
\label{sec:setup}
Consider a population consisting of two groups: 
{\A}, the majority, and {\D},
a minority constituting $\alpha < 0.5$ fraction of the population.
Each group is characterized by 
a true \emph{preference vector} 
$\vec{\theta}^*_i \in \mathbb{R}^d$
that determines the preferences of all members of that group over outcomes/alternatives.
(In the autonomous vehicle example, the outcome could be the individuals chosen to be saved in face of an unavoidable accident.)
Each alternative is represented by a feature vector 
$\vecl{x} \in \mathbb{R}^d$.
We assume alternatives
are drawn independently from $M$, 
a spherically symmetric distribution 
centered at the origin with radius 1.
Members of group $i$ prefer alternative $\vecl{x} \in \mathbb{R}^d$
to alternative $\vecl{y} \in \mathbb{R}^d$
whenever $\vec{\theta}^*_i . \vecl{x} \geq \vec{\theta}^*_i . \vecl{y}$, 
i.e., whenever $\vec{\theta}^*_i . (\vecl{x} - \vecl{y}) \geq 0$. So the true preference vector 
$\vec{\theta}^*_i \in \mathbb{R}^d$ 
of group $i \in \{\A,\D\}$ 
allows us to calculate their rankings 
over any set of alternatives.
See Appendix \ref{appendix:vector-example} for a concrete example of such vectors.
Throughout and unless otherwise specified, we assume $\vec{\theta}^*_A \neq \vec{\theta}^*_D$ (e.g., in the context of the Moral Machine experiment, $\vec{\theta}^*_A$ may reflect group A's preference for sacrificing pedestrians when an autonomous vehicle is faced with an accident, and $\vec{\theta}^*_D$ reflects group D's preference for sacrificing passengers).

Note that in enforcing within-group homogeneity of preferences in our model, we do not assert that groups are homogeneous in the real world. Rather, our aim is to elucidate whether this mechanism respects the preferences of minority groups. In other words, the homogeneity assumption allows us to make a clear analytic point rather than state a realistic or normatively desirable situation. Such simplifying assumptions are common in economics and theoretical computer science literature (e.g., \citet{costinot2007optimal, krishna2012voluntary}).

\paragraph{Formulation as a game}
\label{subsec:model}
We consider a two-player normal-form game
$G={(S_\A, u_\A), (S_\D, u_\D)}$ 
in which groups in the setup described above correspond to players, $\A$ and $\D$ (so player $\A$ and $\D$ have true preference vectors
$\vec{\theta}^*_\A$ and $\vec{\theta}^*_\D$, respectively). $S_i$ denotes player $i$'s strategy space, and $u_i$ their payoff/utility. More precisely, each player $i\in \{A,D\}$ strategically reports 
a preference vector $\vec{\theta}_i \in S_i$ 
(which may be different from $\vec{\theta}^*_i$), where 
$S_i$ consists of all $d$-dimensional vectors 
with Euclidean norm equal to 1. 
The payoff function for player $i$, $u_i: \mathbb{R}^d \times \mathbb{R}^d \longrightarrow \mathbb{R}$, 
is a function where $u_i(\vec{\theta}_\A, \vec{\theta}_\D)$ indicates the payoff of player $i \in \{\A,\D\}$ 
when players $\A$ and $\D$ report preference vectors
$\vec{\theta}_\A$ and $\vec{\theta}_\D$, respectively. 
We assume the payoff function for each player $i \in \{\A,\D\}$ 
captures the proximity of their true preference vector
$\vec{\theta}^*_i$ to the aggregate vector,
$\vec{\theta}_C$, obtained by a \emph{simple averaging} mechanism:
$\vec{\theta}_C = (\alpha \vec{\theta}_D + (1-\alpha) \vec{\theta}_A)/{\lVert \alpha \vec{\theta}_D + (1-\alpha) \vec{\theta}_A \rVert}$.\footnote{While only the direction of $\vec{\theta}_C$ matters, 
we normalize its length for mathematical convenience.}

Each player aims to maximize 
the fraction of decisions
that agree with their true preferences. 
As shown in Proposition~\ref{thm:agreement-prob}, achieving this goal requires 
that each player $i\in \{A,D\}$
reports a preference vector such that 
the resulting aggregate vector $\vec{\theta}_C$ 
is closest, as measured by cosine similarity,
to their true preference vector $\vec{\theta}^*_i$. 
To show this formally, we first define the notion of \emph{agreement} between two preference vectors.
\begin{definition}
Consider two preference vectors 
$\vec{\theta}_i$ and $\vec{\theta}_j \in \mathbb{R}^d$ and a spherically symmetric distribution $M$ over the space of all alternatives (i.e., $\mathbb{R}^d$).
The \emph{level of agreement} between $\vec{\theta}_i$ and $\vec{\theta}_j$, denoted by
$\rho(\vec{\theta}_i, \vec{\theta}_j)$, 
is the probability, 
over draws of pairs of alternatives from $M$, such that $\vec{\theta}_i$ and $\vec{\theta}_j$ 
rank the alternatives in the same order: 
\begin{equation*}
    \rho(\vec{\theta}_i, \vec{\theta}_j)=
    \mathbb{P}_{\vecl{x} \sim M, \vecl{y} \sim M}\left[\sign({\vec{\theta}_i} \cdot (\vecl{y} - \vecl{x})) = \sign({\vec{\theta}_j} \cdot (\vecl{y} - \vecl{x}))\right].
\end{equation*}
\end{definition}

\begin{proposition}
\label{thm:agreement-prob}
Suppose alternatives are sampled i.i.d. from a spherically symmetric distribution $M$ defined over $\mathbb{R}^d$.  
Then for any two preference vectors
$\vec{\theta}_i, \vec{\theta}_j \in \mathbb{R}^d$,
\begin{equation*}
    \rho(\vec{\theta}_i, \vec{\theta}_j) =
    \frac{\pi - \cos^{-1}{(\vec{\theta}_i \cdot \vec{\theta}_j})}{\pi}.
\end{equation*}
\end{proposition}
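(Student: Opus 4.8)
The plan is to reduce the $d$-dimensional agreement probability to a one-dimensional angular computation in the plane spanned by the two preference vectors. First I would introduce the difference vector $\vecl{z} = \vecl{y} - \vecl{x}$ and note that the agreement event depends on the pair $(\vecl{x},\vecl{y})$ only through $\vecl{z}$, since $\sign(\vec{\theta}\cdot(\vecl{y}-\vecl{x})) = \sign(\vec{\theta}\cdot\vecl{z})$. Because $\vecl{x}$ and $\vecl{y}$ are i.i.d.\ draws from the spherically symmetric distribution $M$, the joint law of $(\vecl{x},\vecl{y})$ is invariant under any common rotation $R$, so $R\vecl{z} = R\vecl{y} - R\vecl{x}$ has the same distribution as $\vecl{z}$; that is, $\vecl{z}$ is itself spherically symmetric. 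Since $M$ is continuous, the events $\vecl{z}=\vecl{0}$, $\vec{\theta}_i\cdot\vecl{z}=0$, and $\vec{\theta}_j\cdot\vecl{z}=0$ each have probability zero, so both signs are well defined almost surely.

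Next I would reduce to two dimensions. Writing $P = \mathrm{span}(\vec{\theta}_i, \vec{\theta}_j)$ and using $\vec{\theta}_i\cdot\vecl{z} = \vec{\theta}_i \cdot \mathrm{proj}_P(\vecl{z})$ (and likewise for $j$, since $\vec{\theta}_i,\vec{\theta}_j\in P$), the agreement event is a function of the projection $\mathrm{proj}_P(\vecl{z})$ alone. The key structural fact is that for a spherically symmetric $\vecl{z}$ this projection is rotationally symmetric within $P$: any rotation of $P$, extended by the identity on $P^\perp$, preserves the law of $\vecl{z}$ and acts as a planar rotation on $\mathrm{proj}_P(\vecl{z})$, so the in-plane angle $\psi$ of $\mathrm{proj}_P(\vecl{z})$ is uniform on $[0,2\pi)$. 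This dimension-collapse step is the crux of the argument and the one I expect to require the most care to state rigorously. The degenerate case in which $\vec{\theta}_i$ and $\vec{\theta}_j$ are (anti)parallel, so $\dim P \le 1$, must be handled separately; there $\rho\in\{0,1\}$, matching $\cos^{-1}(\pm 1)\in\{0,\pi\}$.

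Finally I would carry out the planar computation. Choosing coordinates in $P$ so that $\vec{\theta}_i$ sits at angle $0$ and $\vec{\theta}_j$ at angle $\phi$, where $\phi = \cos^{-1}(\vec{\theta}_i\cdot\vec{\theta}_j)\in(0,\pi)$ is the angle between the two vectors, the sign of $\vec{\theta}_i\cdot\vecl{z}$ is positive exactly when $\psi\in(-\pi/2,\pi/2)$, and the sign of $\vec{\theta}_j\cdot\vecl{z}$ is positive exactly when $\psi\in(\phi-\pi/2,\phi+\pi/2)$. Intersecting these arcs shows the two signs are both positive on an arc of length $\pi-\phi$; by antipodal symmetry they are both negative on an arc of the same length, for total agreement measure $2(\pi-\phi)$ out of $2\pi$. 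Hence $\rho(\vec{\theta}_i,\vec{\theta}_j) = \frac{2(\pi-\phi)}{2\pi} = \frac{\pi-\phi}{\pi}$, and substituting $\phi = \cos^{-1}(\vec{\theta}_i\cdot\vec{\theta}_j)$ gives the claim. Since every sign above is invariant to positive rescaling of $\vecl{z}$, $\vec{\theta}_i$, and $\vec{\theta}_j$, the same derivation covers non-unit vectors with $\phi$ interpreted as the angle between them.
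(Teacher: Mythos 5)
Your proof is correct and follows essentially the same route as the paper's: pass to the difference vector $\vecl{z} = \vecl{y} - \vecl{x}$, use spherical symmetry to conclude its direction is uniformly distributed, and compute the agreement probability as an angular measure proportional to $\pi - \cos^{-1}(\vec{\theta}_i \cdot \vec{\theta}_j)$. The only distinction is rigor, not substance: you make explicit the projection onto $\mathrm{span}(\vec{\theta}_i, \vec{\theta}_j)$ and the degenerate collinear case, steps the paper glosses over by asserting the uniform-direction claim directly in planar language.
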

\begin{proof}

Note that the preference of a player $i$ 
over alternatives $\vecl{x}$ and $\vecl{y}$
depends only on the sign of
$\vec{\theta}_{i} \cdot (\vecl{y} - \vecl{x})$.
Because $\vecl{x}$ and $\vecl{y}$
are drawn independently from
a spherically symmetrical distribution $M$,
we can see, by symmetry,
that the difference vector, $\vecl{y} - \vecl{x}$,
can points in any direction from $[0,2\pi]$ 
with equal (uniform) probability. 

Any preference vector $\vec{\theta}$ 
defines two half-spaces over the vector
$\vecl{y}-\vecl{x} \in \mathbb{R}^d$.
By $H^{+}$, we denote the half-space 
in which $\vec{\theta} \cdot (\vecl{y}-\vecl{x}) > 0$
(and thus $\vecl{y}$ is preferred to $\vecl{x}$)
and by $H^{-}$, we denote the half-space 
in which  $\vec{\theta} \cdot (\vecl{y}-\vecl{x}) < 0$ 
(and thus $\vecl{x}$ is preferred to $\vecl{y}$).
Because the event where $\vecl{x} = \vecl{y}$ 
has $0$ measure, tie-breaking conventions 
will not impact our analysis.

Note that for any given player, 
the line separating 
$H^{+}$ from $H^{-}$ is 
perpendicular to $\vec{\theta}$
and passes through the origin.
Disagreements among 
$\vec{\theta}_i$ and $\vec{\theta}_j$
correspond to pairs of alternatives
such that either 
$\vecl{x} - \vecl{y} \in 
\{H_j^{-} \cap H_i^{+}\}$
or $\vecl{x} - \vecl{y} \in 
\{H_{i}^- \cap H_{j}^+\}$. So we have:
\begin{eqnarray*}
    \rho(\vec{\theta}_i, \vec{\theta}_j)&=&
    \mathbb{P}_{\vecl{x} \sim M, \vecl{y} \sim M}\left[\sign({\vec{\theta}_i} \cdot (\vecl{y} - \vecl{x})) = \sign({\vec{\theta}_j} \cdot (\vecl{y} - \vecl{x}))\right],\\
    &=& 1- \mathbb{P}_{\vecl{x} \sim M, \vecl{y} \sim M}\left[\vecl{x} - \vecl{y} \in H_j^{-} \cap H_i^{+} \text{ or } \vecl{x} - \vecl{y} \in H_j^{+} \cap H_i^{-}\right].
\end{eqnarray*}
Note that both $\{H_j^{-} \cap H_i^{+}\}$
and $\{H_i^{-} \cap H_j^{+}\}$
are cones whose vertices lie at the origin
and whose vertex angles are each given 
by $\measuredangle(\vec{\theta}_i, \vec{\theta}_j)$. See Figure \ref{fig:half-spaces}
for an illustration of these half-spaces and their intersections.

\begin{figure}
    \centering
    \includegraphics[width=0.32\textwidth]{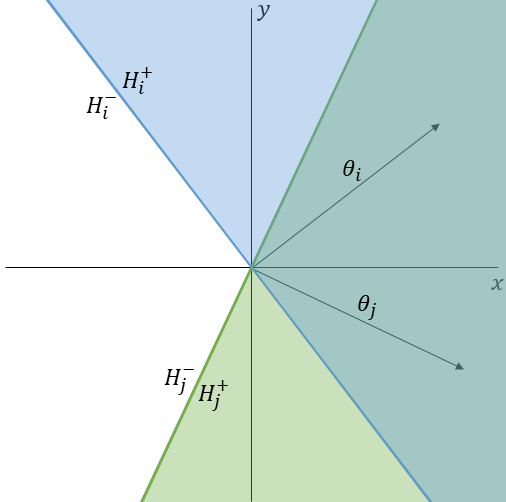}
    \caption{Half-spaces $H_i^+, H_i^-, H_j^+, H_j^-$ and intersections for preference vectors $\vec{\theta}_i$ and $\vec{\theta}_j$.}
    \label{fig:half-spaces}
    \vspace{-3mm}
\end{figure}

Because the alternatives, $\vecl{x},\vecl{y}$ are drawn i.i.d.
from a spherically symmetric distribution,
\begin{align*}
\mathbb{P}\left[\vecl{x} - \vecl{y} \in \{H_j^{-} \cap H_i^{+}\}\right]
&= \mathbb{P}\left[\vecl{x} - \vecl{y} \in
\{H_i^{-} \cap H_j^{+}\}\right]\\
&= \frac{\cos^{-1}(\vec{\theta}_i \cdot \vec{\theta}_j)}{2\pi}.    
\end{align*}

Moreover, because these regions of disagreement are disjoint,
the total probability of disagreement
$\vec{\theta}_i$,
$\vec{\theta}_j$ is given by the sum
of the probabilities of $\vecl{y} - \vecl{x}$
lying in either region of disagreement,
$\cos^{-1}{(\vec{\theta}_i \cdot \vec{\theta}_j)})/2\pi$.
Thus, the level of agreement is given by
$
  \frac{\pi - \cos^{-1}{(\vec{\theta}_i \cdot \vec{\theta}_j)}}{\pi}.
$
\end{proof}
Because the level of agreement of 
the aggregate decisions with player $i$
is monotonic in the cosine similarity 
between their true preferences $\vec{\theta}^*_i$ 
and the aggregate vector $\vec{\theta}_C$,
we can equivalently take the cosine similarity 
as the payoff of interest. 
More precisely, we can define $u_\A, u_\D$ as follows: for all $\vec{\theta}_\A \in S_\A$ 
and $\vec{\theta}_\D \in S_\D$,
$$u_\A(\vec{\theta}_\A, \vec{\theta}_\D) = \vec{\theta}_C \cdot \vec{\theta}^*_\A,$$
$$u_\D(\vec{\theta}_\A, \vec{\theta}_\D) = \vec{\theta}_C \cdot \vec{\theta}^*_\D.$$

We are interested in understanding 
pure strategy Nash equilibria of the above game. 
To define pure strategy Nash equilibria precisely, 
we need to first define the concept
\emph{best responses} to a given pure strategy. We say
a pure strategy $\vec{\theta}_i \in S_i$ 
is a \emph{best response} 
to $\vec{\theta}_{-i} \in S_{-i}$ in $G$ 
if for all $\vec{\hat{\theta}}_i \in  S_i$,
$$u_i(\vec{\theta}_i, \vec{\theta}_{-i}) \geq u_i(\vec{\hat{\theta}}_i, \vec{\theta}_{-i}).$$
Given a strategy $\vec{\theta}_{-i}$ for player ${-i}$, 
we will use the notation $BR_i(\vec{\theta}_{-i})$ 
to refer to the set of all pure best responses of $i$ to $\vec{\theta}_{-i}$.

\begin{definition}[Nash Equilibrium]
A strategy profile $(\vec{\theta}'_\A, \vec{\theta}'_\D)$ 
is a pure Nash Equilibrium for $G$ 
if $\vec{\theta}'_\A$ is a best-response 
to $\vec{\theta}'_\D$ and vice versa.
\end{definition}
From here on, we restrict our analysis to the case where $d=2$.
As we argue in Appendix \ref{appendix:two-dim},
it may suffice to consider 
the 2-dimensional plane containing the origin,
$\vec{\theta}^*_\A$, and $\vec{\theta}^*_\D$.

\section{Findings}
\label{sec:analysis}
We are now ready to introduce our primary findings.

\paragraph{Disproportionate majority voice via the averaging mechanisms}
Assume that all participants truthfully report their preferences. 
Even here, the averaging mechanism has some strange properties.
Notably, focusing on the fraction of all disputed cases 
where group {\D} prevails, specifically

\text{}
\begin{equation}
\label{eq:conditional-minority-agreement}
    \mathbb{P}[\textrm{sign}({\vec{\theta}_C} \cdot (\vecl{y} - \vecl{x})) = \textrm{sign}({\vec{\theta}^*_\D} \cdot (\vecl{y} - \vecl{x}))\: |\:  
    \textrm{sign}({\vec{\theta}^*_\A} \cdot (\vecl{y} - \vecl{x})) \neq \textrm{sign}({\vec{\theta}^*_\D} \cdot (\vecl{y} - \vecl{x}))]
\end{equation}
we find that this is  
sub-proportional in $\alpha$
(the pattern follows a sigmoid-like shape).
Moreover, the degree of sub-proportionality depends
on the angle between the true preference vectors because even though
$\vec{\theta}_C$ is a weighted sum of $\vec{\theta}^*_{\A}$ and $\vec{\theta}^*_{\D}$,
the direction of $\vec{\theta}_C$ (and in turn the levels of agreement with the 
{\dis} and {\adv} groups) depends on the directions of $\vec{\theta}^*_{\A}$ and $\vec{\theta}^*_{\D}$
in addition to $\alpha$.
This can be seen clearly in Figure \ref{fig:conditional-minority-consensus-cosine-sim} 
where the probability of group {\D} prevailing is computed
via Equation \ref{eq:conditional-minority-agreement}
under varying settings of their population share $\alpha$ 
and for various levels of agreement between the two groups (see Appendix \ref{appendix:verbose-eq}).
Note that sub-proportionality becomes more extreme 
as the angle between the true preference vectors increases.

\begin{figure}
    \centering
    \includegraphics[width=0.45\textwidth]{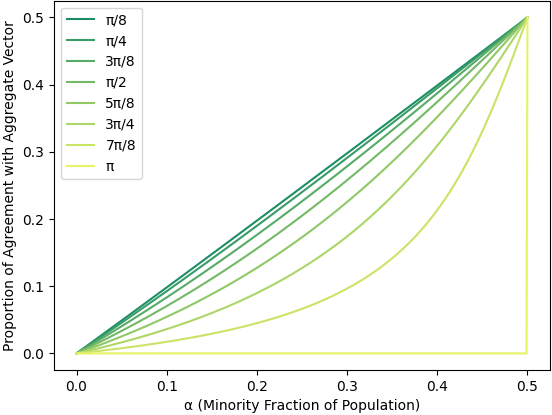}
    \caption{ 
    As $\alpha$ increases, 
    the probability that the aggregate agrees with the {\dis} group increases, 
    but this relationship is sub-proportional in $\alpha$ 
    and depends on $\measuredangle(\vec{\theta}^*_{\D}, \vec{\theta}^*_{\A})$.}
    \label{fig:conditional-minority-consensus-cosine-sim}
    \vspace{-3mm}
\end{figure}

\paragraph{Majority Group Can Create Aggregate Vector In Any Direction}
Next, we address the setting where groups can report their preferences strategically.
First, we find that for any preference vector $\vec{\theta}_\D$ 
reported by the minority group,
the {\adv} group can always choose some preference vector to report
such that the aggregate vector is identical to their true preferences 
$\vec{\theta}_C = \vec{\theta}^*_\A$. 
This implies that if a pure strategy Nash equilibrium exists,
group {\A} always gets their way. (In our running example, this implies that the autonomous vehicle always operates in accordance with group $A$'s preferences.)
\begin{lemma}
\label{thm:any-consensus}
For any fixed vector $\vec{\theta}_\D$ played by the minority group, 
the majority group can report a vector $\vec{\theta}_\A$,
such that $\vec{\theta}_C = \vec{\theta}^*_\A$.
\end{lemma}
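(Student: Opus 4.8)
The plan is to solve directly for the report $\vec{\theta}_\A$, exploiting the fact that only the \emph{direction} of $\vec{\theta}_C$ matters. Since $\vec{\theta}_C$ is obtained by normalizing $\alpha\vec{\theta}_\D + (1-\alpha)\vec{\theta}_\A$, forcing $\vec{\theta}_C = \vec{\theta}^*_\A$ is equivalent to making the unnormalized sum a \emph{positive} multiple of $\vec{\theta}^*_\A$: we seek a scalar $c > 0$ and a unit vector $\vec{\theta}_\A$ with
\[
\alpha\vec{\theta}_\D + (1-\alpha)\vec{\theta}_\A = c\,\vec{\theta}^*_\A.
\]
Rearranging yields the candidate $\vec{\theta}_\A = (c\,\vec{\theta}^*_\A - \alpha\vec{\theta}_\D)/(1-\alpha)$, so the whole problem reduces to choosing $c > 0$ so that this candidate lands on the unit sphere, i.e., so that it is an admissible strategy in $S_\A$.

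First I would impose the norm constraint $\lVert c\,\vec{\theta}^*_\A - \alpha\vec{\theta}_\D \rVert = 1-\alpha$ and square it. Writing $t = \vec{\theta}^*_\A \cdot \vec{\theta}_\D$ and using that both true vectors have unit length, this expands to the quadratic
\[
c^2 - 2\alpha t\, c + (2\alpha - 1) = 0.
\]
Next I would analyze its roots. The discriminant equals $4\big(\alpha^2 t^2 + 1 - 2\alpha\big)$, which is strictly positive because $\alpha < 0.5$ forces $1 - 2\alpha > 0$; hence real roots always exist. Moreover the product of the roots equals the constant term $2\alpha - 1 < 0$, so exactly one root is positive, namely $c = \alpha t + \sqrt{\alpha^2 t^2 + 1 - 2\alpha}$.

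With this $c$ in hand, I would close the argument by direct substitution. The corresponding $\vec{\theta}_\A$ has unit norm by construction, so it is a legal report; plugging it into the averaging rule cancels the $\alpha\vec{\theta}_\D$ term and gives $\alpha\vec{\theta}_\D + (1-\alpha)\vec{\theta}_\A = c\,\vec{\theta}^*_\A$. Since $c > 0$ and $\vec{\theta}^*_\A \neq \vec{0}$, the aggregate is nonzero and well defined, and normalization returns exactly $\vec{\theta}_C = \vec{\theta}^*_\A$.

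The main obstacle is precisely the feasibility check: guaranteeing that the algebraically forced $\vec{\theta}_\A$ is admissible, rather than merely a formal solution. This is where the hypothesis $\alpha < 0.5$ does the real work, since it both makes the discriminant positive and pins the sign of the constant term, yielding a unique positive scaling $c$ for \emph{every} choice of $\vec{\theta}_\D$ (equivalently, every $t \in [-1,1]$). I would sanity-check the extreme case $t = -1$, where $\vec{\theta}_\D$ is antipodal to $\vec{\theta}^*_\A$ and the required $\vec{\theta}_\A$ swings most sharply, but the uniform quadratic analysis already subsumes it.
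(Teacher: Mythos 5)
Your proposal is correct and follows essentially the same route as the paper's own proof: both reduce the problem to finding the positive scalar multiple $c$ (the paper's $L$) solving the quadratic $c^2 - 2\alpha(\vec{\theta}_\D \cdot \vec{\theta}^*_\A)c + (2\alpha-1) = 0$, select the unique positive root using $1-2\alpha > 0$, and substitute back to obtain the explicit report $\vec{\theta}_\A$. The only cosmetic difference is that the paper derives the quadratic via the Law of Cosines on the vector triangle while you expand the squared-norm constraint directly, and you justify root selection via the sign of the product of roots rather than by contradiction with $L \geq 0$.
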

\begin{proof}[Proof sketch]
In order to ensure $\vec{\theta}^*_\A = \vec{\theta}_C$, player $\A$ must report $\vec{\theta}_\A$ such that $\frac{\alpha \vec{\theta}_D + (1-\alpha) \vec{\theta}_A}{\lVert \alpha \vec{\theta}_D + (1-\alpha) \vec{\theta}_A \rVert} = \vec{\theta}^*_\A$.
For any vector $\vec{\theta}_\D$, it is easy to see that the above equation is equivalent to:
\begin{equation}
\label{eq:maj-group-any-direction}
     \vec{\theta}_\A = 
     \frac{\bigg[\alpha (\vec{\theta}_\D \cdot \vec{\theta}^*_\A) + \sqrt{\alpha^2 (\vec{\theta}_\D \cdot \vec{\theta}^*_\A)^2 - 2\alpha + 1}\bigg]\vec{\theta}^*_\A - \alpha \vec{\theta}_\D}{(1 - \alpha)}.
\end{equation}
(See Appendix \ref{appendix:maj-agg-extended-pf} for the full derivation.)
\end{proof}

\subsection{Conditions for Pure Strategy Nash Equilibrium}
In this section, we present a necessary condition for the existence of a pure Nash Equilibrium in the above game. 
First, we derive the maximum amount by which the {\dis} group can pull the aggregate vector based on their relative population size by reporting their preference strategically.

\begin{lemma}
\label{lem:minority-ort}
Consider a fixed reported {\adv} group vector $\vec{\theta}_\A$. Suppose the {\dis} group reports $\vec{\theta}_\D$ to yield an aggregate vector $\vec{\theta}_C$. Then for any $\vec{\theta}_\D \in S_\D$,
\begin{equation}
    \measuredangle \left(\vec{\theta}_C, \vec{\theta}_\A \right) \leq \sin^{-1}{\left(\frac{\alpha}{1-\alpha}\right)}.
\end{equation}
The equality occurs if and only if $\vec{\theta}_\D$ is orthogonal to $\vec{\theta}_C$.
\end{lemma}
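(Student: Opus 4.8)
The plan is to recast the statement as a planar geometry problem about vector addition and then read off the extremal angle via the law of sines. Since both reported vectors are unit vectors and $\alpha < \tfrac{1}{2}$, I would work with the unnormalized aggregate $\vec{s} = (1-\alpha)\vec{\theta}_\A + \alpha\vec{\theta}_\D$, so that $\vec{\theta}_C = \vec{s}/\lVert \vec{s}\rVert$ and hence $\measuredangle(\vec{\theta}_C, \vec{\theta}_\A) = \measuredangle(\vec{s}, \vec{\theta}_\A)$. Fixing $\vec{\theta}_\A$, the point $P$ at the tip of $(1-\alpha)\vec{\theta}_\A$ is fixed at distance $1-\alpha$ from the origin $O$, while as $\vec{\theta}_\D$ ranges over unit vectors in the plane, the tip $Q$ of $\vec{s}$ traces a circle of radius $\alpha$ centered at $P$ (because $\vec{s} - (1-\alpha)\vec{\theta}_\A = \alpha\vec{\theta}_\D$ has length exactly $\alpha$). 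The quantity to bound, $\phi := \measuredangle(\vec{\theta}_C, \vec{\theta}_\A)$, is precisely the angle $\angle POQ$ subtended at the origin, and maximizing it corresponds to sweeping $Q$ to the tangent point of the circle as seen from $O$.

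First I would apply the law of sines in triangle $OPQ$. The side $PQ$ opposite vertex $O$ has length $\alpha$, and the side $OP$ opposite vertex $Q$ has length $1-\alpha$, which yields
\[
\sin\phi \;=\; \frac{\alpha}{1-\alpha}\,\sin\!\big(\angle OQP\big) \;\le\; \frac{\alpha}{1-\alpha},
\]
since $\sin(\angle OQP) \le 1$. Because $\alpha < \tfrac{1}{2}$ forces $\alpha/(1-\alpha) < 1$, the right-hand side lies in the domain of $\sin^{-1}$, so this is already the right numerical bound.

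The one subtlety is that controlling $\sin\phi$ only gives $\phi \le \sin^{-1}(\alpha/(1-\alpha))$ if $\phi$ is guaranteed acute, since an obtuse angle could also have a small sine. This is where $\alpha < \tfrac{1}{2}$ must be used a second time: I would check $\cos\phi > 0$ directly by computing $\vec{s}\cdot\vec{\theta}_\A = (1-\alpha) + \alpha(\vec{\theta}_\D \cdot \vec{\theta}_\A) \ge (1-\alpha) - \alpha = 1 - 2\alpha > 0$, where the inequality uses $\vec{\theta}_\D \cdot \vec{\theta}_\A \ge -1$. Equivalently, the origin lies strictly outside the circle of radius $\alpha$ about $P$ precisely because $1-\alpha > \alpha$, so every ray from $O$ to a point of the circle makes an acute angle with $OP$. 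With $\phi \in [0, \pi/2)$ established, monotonicity of $\sin^{-1}$ transfers the sine bound to the desired angle bound.

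Finally, for the equality condition I would track when $\sin(\angle OQP) = 1$, i.e.\ $\angle OQP = \pi/2$. The vector from $Q$ to $P$ equals $-\alpha\vec{\theta}_\D$, while the vector from $Q$ to $O$ is parallel to $-\vec{\theta}_C$; a right angle at $Q$ is therefore equivalent to $\vec{\theta}_\D \perp \vec{\theta}_C$, which is geometrically the tangency configuration in which the ray $OQ$ just touches the circle. Hence equality holds if and only if $\vec{\theta}_\D$ is orthogonal to $\vec{\theta}_C$, as claimed. The main obstacle is exactly this acuteness check: the law of sines by itself only bounds $\sin\phi$, and the whole argument hinges on invoking $\alpha < \tfrac{1}{2}$ to confirm that $\phi$ never leaves the principal branch where the sine bound is equivalent to an angle bound.
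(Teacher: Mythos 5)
Your proof is correct and follows essentially the same route as the paper's: applying the Law of Sines to the triangle formed by $(1-\alpha)\vec{\theta}_\A$, $\alpha\vec{\theta}_\D$, and their sum to get $\sin\measuredangle(\vec{\theta}_C,\vec{\theta}_\A) = \frac{\alpha}{1-\alpha}\sin\measuredangle(\vec{\theta}_C,\vec{\theta}_\D)$, bounding the latter sine by $1$, invoking monotonicity of $\sin^{-1}$, and identifying equality with the orthogonality configuration. Your explicit acuteness check, $\vec{s}\cdot\vec{\theta}_\A \geq 1-2\alpha > 0$, is a small but genuine improvement in rigor: the paper's proof writes $\measuredangle(\vec{\theta}_C,\vec{\theta}_\A) = \sin^{-1}\left(\frac{\alpha\sin\measuredangle(\vec{\theta}_C,\vec{\theta}_\D)}{1-\alpha}\right)$ directly, implicitly assuming the angle lies on the principal branch, which your observation justifies.
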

\begin{proof}[Proof sketch]
It is easy to see that $\vec{\theta}_\D$ yields an aggregate vector $\vec{\theta}_C$ such that $\measuredangle \left(\vec{\theta}_C, \vec{\theta}_\A \right) = \sin^{-1}{\left(\frac{\alpha \sin{(\measuredangle \left(\vec{\theta}_C, \vec{\theta}_\D \right))}}{1-\alpha}\right)}$. Note that because $\sin{(\measuredangle \left(\vec{\theta}_C, \vec{\theta}_\D \right))} \leq 1$, $\left(\frac{\alpha \sin{(\measuredangle \left(\vec{\theta}_C, \vec{\theta}_\D \right))}}{1-\alpha}\right) \leq \left(\frac{\alpha}{1-\alpha}\right)$. Moreover, $\sin^{-1}(.)$ is a monotonic function in $[-1,1]$. Therefore, $\measuredangle \left(\vec{\theta}_C, \vec{\theta}_\A \right) = \sin^{-1}{\left(\frac{\alpha \sin{(\measuredangle \left(\vec{\theta}_C, \vec{\theta}_\D \right))}}{1-\alpha}\right)} \leq \sin^{-1}{\left(\frac{\alpha}{1-\alpha}\right)}$. Additionally, $\measuredangle \left(\vec{\theta}_C, \vec{\theta}_\A \right) = \sin^{-1}{\left(\frac{\alpha}{1-\alpha}\right)}$ if and only if $\sin{(\measuredangle \left(\vec{\theta}_C, \vec{\theta}_\D \right))} = 1$. This in turn happens if and only if $\vec{\theta}_C \perp \vec{\theta}_\D$.

(See Appendix \ref{appendix:dist-deriv} for the full derivation.) 
\end{proof}

These two lemmas yield the subsequent three lemmas, which are proved in Appendix \ref{appendix:corr-proofs}:

First, if a pure strategy Nash equilibrium exists, the {\adv} group can always report its preference vector such that the aggregate matches their true preferences.
\begin{lemma}
\label{cor:maj-consensus}
Consider the game, $G$, described in Section \ref{subsec:model}. If $(\vec{\theta}'_\D, \vec{\theta}'_\A)$ is a pure strategy Nash equilibrium for $G$, then $\vec{\theta}_C = \vec{\theta}^*_{\A}$. 
\end{lemma}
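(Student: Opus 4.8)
The plan is to exploit the fact that the majority's payoff is a cosine similarity between two unit vectors—hence bounded above by $1$—together with Lemma~\ref{thm:any-consensus}, which guarantees the majority can always \emph{achieve} this upper bound no matter what the minority plays. The whole argument is a clean ``a best response must attain the global maximum'' observation, so I do not anticipate a genuinely hard step; the care is only in wiring the earlier results together.

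First I would record the upper bound. Since $\vec{\theta}_C$ is normalized to unit length and $\vec{\theta}^*_\A$ is a unit vector, the Cauchy--Schwarz inequality gives
\begin{equation*}
u_\A(\vec{\theta}_\A, \vec{\theta}_\D) = \vec{\theta}_C \cdot \vec{\theta}^*_\A \leq \lVert \vec{\theta}_C \rVert \, \lVert \vec{\theta}^*_\A \rVert = 1,
\end{equation*}
with equality if and only if $\vec{\theta}_C = \vec{\theta}^*_\A$. Thus the majority's payoff is globally maximized precisely when the aggregate coincides with its true preference vector.

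Next I would invoke Lemma~\ref{thm:any-consensus} at the equilibrium. Fix the equilibrium minority report $\vec{\theta}'_\D$. By that lemma, there exists a majority report $\hat{\vec{\theta}}_\A$ (explicitly, the vector given in Equation~\ref{eq:maj-group-any-direction}) whose induced aggregate equals $\vec{\theta}^*_\A$, so that $u_\A(\hat{\vec{\theta}}_\A, \vec{\theta}'_\D) = 1$. Because $(\vec{\theta}'_\D, \vec{\theta}'_\A)$ is a pure Nash equilibrium, $\vec{\theta}'_\A$ is a best response to $\vec{\theta}'_\D$, hence $u_\A(\vec{\theta}'_\A, \vec{\theta}'_\D) \geq u_\A(\hat{\vec{\theta}}_\A, \vec{\theta}'_\D) = 1$. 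Combining this with the upper bound $u_\A \leq 1$ forces $u_\A(\vec{\theta}'_\A, \vec{\theta}'_\D) = 1$, and the equality condition from the first step then yields $\vec{\theta}_C = \vec{\theta}^*_\A$ at the equilibrium.

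The only point requiring attention—rather than a true obstacle—is confirming that Lemma~\ref{thm:any-consensus} is applicable to the particular minority vector $\vec{\theta}'_\D$ arising in the equilibrium; this is immediate since that lemma is stated for \emph{any} fixed minority vector. Everything else is the standard logic that a best response can never do worse than an available strategy achieving the payoff ceiling.
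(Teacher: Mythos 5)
Your proposal is correct and follows essentially the same route as the paper: both invoke Lemma~\ref{thm:any-consensus} to show the majority can always drive the aggregate to $\vec{\theta}^*_\A$ (the unique global maximizer of $u_\A$), so any equilibrium best response must do the same. The paper phrases this as a proof by contradiction while you argue directly via the payoff ceiling, and you make the Cauchy--Schwarz equality condition explicit where the paper leaves it implicit, but these are expository differences only.
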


Second, we derive an upper bound on the angle between the aggregate and {\dis} group's true preferences.
\begin{lemma}
\label{cor:min-ub}
For any $\vec{\theta}_{\A} \in S_{\A}$, there exists $\vec{\theta}_\D \in S_{\D}$ such that $\measuredangle \left(\vec{\theta}_C, \vec{\theta}^*_{\D}\right) \leq \pi - \sin^{-1}\left(\frac{\alpha}{1-\alpha}\right)$.    
\end{lemma}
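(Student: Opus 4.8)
The plan is to leverage Lemma~\ref{lem:minority-ort} to first pin down the full set of aggregate directions the minority can induce, and then show that the closest such direction to $\vec{\theta}^*_\D$ is never farther than $\pi - \sin^{-1}(\alpha/(1-\alpha))$. Write $\phi = \sin^{-1}(\alpha/(1-\alpha))$, which is well-defined since $\alpha < 1/2$ forces $\alpha/(1-\alpha) < 1$. Lemma~\ref{lem:minority-ort} states that every achievable $\vec{\theta}_C$ satisfies $\measuredangle(\vec{\theta}_C, \vec{\theta}_\A) \leq \phi$, so all reachable aggregates lie in a cone of half-angle $\phi$ about the reported $\vec{\theta}_\A$.

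First I would argue that, working in the plane, the set of reachable aggregate directions is exactly the closed arc consisting of all unit directions within angle $\phi$ of $\vec{\theta}_\A$ (an arc of half-width $\phi$ on each side of $\vec{\theta}_\A$). The upper bound comes directly from Lemma~\ref{lem:minority-ort}; for the reverse inclusion (that every direction in this arc is actually attained) I would invoke a continuity argument: as $\vec{\theta}_\D$ rotates continuously around the unit circle, the aggregate direction $\vec{\theta}_C$ varies continuously, and the equality condition of Lemma~\ref{lem:minority-ort} (namely $\vec{\theta}_\D \perp \vec{\theta}_C$) shows the extreme deviation $\phi$ is attained on each side of $\vec{\theta}_\A$, so by the intermediate value theorem every intermediate angle is attained as well.

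Next I would minimize $\measuredangle(\vec{\theta}_C, \vec{\theta}^*_\D)$ over this reachable arc. Let $\beta = \measuredangle(\vec{\theta}_\A, \vec{\theta}^*_\D) \in [0,\pi]$. If $\beta \leq \phi$, then $\vec{\theta}^*_\D$ itself lies in the reachable arc, so the minority can force $\vec{\theta}_C = \vec{\theta}^*_\D$ and the angle is $0$. If $\beta > \phi$, then the closest point of the arc to $\vec{\theta}^*_\D$ is its endpoint on the side toward $\vec{\theta}^*_\D$, yielding $\measuredangle(\vec{\theta}_C, \vec{\theta}^*_\D) = \beta - \phi$. In either case the best achievable angle equals $\max(0, \beta - \phi)$, which is maximized over $\beta \in [0,\pi]$ at $\beta = \pi$, giving the claimed bound $\pi - \phi = \pi - \sin^{-1}(\alpha/(1-\alpha))$.

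I expect the main obstacle to be rigorously justifying that the reachable set is the \emph{full} arc within angle $\phi$ of $\vec{\theta}_\A$, rather than merely some subset of it --- that is, the surjectivity of the map $\vec{\theta}_\D \mapsto \vec{\theta}_C$ onto that arc. Everything else (the case split and the final maximization over $\beta$) is elementary once the geometry is in place. A clean way to settle the surjectivity is to use the explicit relation $\measuredangle(\vec{\theta}_C, \vec{\theta}_\A) = \sin^{-1}(\alpha \sin(\measuredangle(\vec{\theta}_C, \vec{\theta}_\D))/(1-\alpha))$ from the proof of Lemma~\ref{lem:minority-ort}: as $\measuredangle(\vec{\theta}_C, \vec{\theta}_\D)$ ranges over $[0,\pi]$, the right-hand side continuously sweeps out all of $[0,\phi]$, and the two sides of $\vec{\theta}_\A$ are covered by the two orientations of $\vec{\theta}_\D$.
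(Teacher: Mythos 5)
Your proof is correct and takes essentially the same route as the paper's: both invoke Lemma~\ref{lem:minority-ort} to pull the aggregate by the maximal deviation $\sin^{-1}\left(\frac{\alpha}{1-\alpha}\right)$ toward $\vec{\theta}^*_\D$, then bound $\measuredangle\left(\vec{\theta}_\A, \vec{\theta}^*_\D\right)$ by $\pi$. The only difference is that you explicitly establish attainability of the full reachable arc (via the equality condition and a continuity argument) and treat the case $\measuredangle\left(\vec{\theta}_\A, \vec{\theta}^*_\D\right) \leq \sin^{-1}\left(\frac{\alpha}{1-\alpha}\right)$ separately, both of which the paper's shorter proof leaves implicit.
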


Lastly, in a pure strategy Nash equilibrium, the {\dis} group's best response is orthogonal to the aggregate.
\begin{lemma}
\label{lem:min-orth-eq}
In any pure strategy Nash equilibrium defined by best responses $\vec{\theta}'_{\A}$ and $\vec{\theta}'_{\D}$, $\vec{\theta}'_{\D} \perp \vec{\theta}_C$, and $\measuredangle \left(\vec{\theta}_C, \vec{\theta}'_\A \right) = \sin^{-1}{\left(\frac{\alpha}{1-\alpha}\right)}$.
\end{lemma}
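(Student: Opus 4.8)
The plan is to reduce the entire statement to the single equality $\measuredangle(\vec{\theta}_C, \vec{\theta}'_\A) = \sin^{-1}(\alpha/(1-\alpha))$, since the orthogonality claim $\vec{\theta}'_\D \perp \vec{\theta}_C$ then follows immediately from the equality clause of Lemma~\ref{lem:minority-ort}, which asserts that the angle bound is tight precisely when the reported minority vector is orthogonal to the aggregate. Thus the whole argument hinges on showing that, at any pure strategy equilibrium, the minority's best response drives the aggregate to the extreme of its reachable range.

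The first step is to pin down the aggregate: by Lemma~\ref{cor:maj-consensus}, at any pure strategy Nash equilibrium we have $\vec{\theta}_C = \vec{\theta}^*_\A$, and since by assumption $\vec{\theta}^*_\A \neq \vec{\theta}^*_\D$ (both unit vectors), the minority's equilibrium payoff is $u_\D = \vec{\theta}^*_\A \cdot \vec{\theta}^*_\D < 1$. Next I would hold $\vec{\theta}'_\A$ fixed and describe the minority's decision problem geometrically: by Lemma~\ref{lem:minority-ort}, as $\vec{\theta}_\D$ ranges over $S_\D$, the attainable aggregate directions form an arc of half-angle $\beta := \sin^{-1}(\alpha/(1-\alpha))$ centered at $\vec{\theta}'_\A$, and the minority maximizes $\vec{\theta}_C \cdot \vec{\theta}^*_\D$, i.e.\ seeks the point of this arc closest in angle to $\vec{\theta}^*_\D$. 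The crux is then a simple case split: if $\vec{\theta}^*_\D$ lay within angular distance $\beta$ of $\vec{\theta}'_\A$, the minority could set $\vec{\theta}_C = \vec{\theta}^*_\D$ and attain payoff $1$, forcing $\vec{\theta}^*_\A = \vec{\theta}_C = \vec{\theta}^*_\D$ and contradicting distinctness. Hence $\vec{\theta}^*_\D$ lies strictly outside the reachable arc, so the constrained optimum is attained at the arc endpoint nearest $\vec{\theta}^*_\D$, where $\measuredangle(\vec{\theta}_C, \vec{\theta}'_\A) = \beta$ exactly. Applying the equality clause of Lemma~\ref{lem:minority-ort} at this optimizing $\vec{\theta}'_\D$ yields $\vec{\theta}'_\D \perp \vec{\theta}_C$, completing both parts.

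The main obstacle is justifying rigorously that the minority's objective is maximized at the boundary of the reachable arc rather than in its interior. This calls for a monotonicity argument in the plane: writing the objective as $\cos(\psi - \phi)$, where $\phi \in [-\beta, \beta]$ parametrizes the aggregate along the arc relative to $\vec{\theta}'_\A$ and $\psi$ denotes the angle of $\vec{\theta}^*_\D$, I would use that $\beta < \pi/2$ (since $\alpha < 1/2$ gives $\alpha/(1-\alpha) < 1$) to show the cosine similarity is strictly monotone in $\phi$ toward $\vec{\theta}^*_\D$ over the whole arc whenever $|\psi| > \beta$, so the unique maximizer is the nearest endpoint. The other delicate point is the reduction to $d=2$ and confirming that the \emph{iff} in Lemma~\ref{lem:minority-ort} applies verbatim at the optimizing $\vec{\theta}'_\D$; both are routine once the arc picture is in place, and the distinctness assumption $\vec{\theta}^*_\A \neq \vec{\theta}^*_\D$ is exactly what rules out the degenerate interior case.
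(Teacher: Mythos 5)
Your proposal takes essentially the same route as the paper's proof: both pin the equilibrium aggregate to $\vec{\theta}^*_\A$ via Lemma~\ref{cor:maj-consensus}, use Lemma~\ref{lem:minority-ort} to describe the reachable aggregates as an arc of half-angle $\beta = \sin^{-1}\left(\frac{\alpha}{1-\alpha}\right)$ about $\vec{\theta}'_\A$, rule out $\vec{\theta}^*_\D$ lying in or on that arc using $\vec{\theta}^*_\A \neq \vec{\theta}^*_\D$, and conclude that the minority's best response pushes the aggregate to the arc's boundary, where the equality clause of Lemma~\ref{lem:minority-ort} yields orthogonality. The paper organizes this as a contradiction over three cases (true minority vector inside, on the border of, or outside the cone), whereas you solve the minority's constrained maximization directly; the underlying geometry is identical.

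One step, as you stated it, is false and needs a small repair: the payoff $\phi \mapsto \cos(\psi - \phi)$ is \emph{not} strictly monotone on all of $[-\beta, \beta]$ whenever $|\psi| > \beta$. For $\psi \in (\pi - \beta, \pi]$ the derivative $\sin(\psi - \phi)$ changes sign at $\phi = \psi - \pi$, which lies in the interior of the arc, so the payoff first decreases and then increases; the bound $\beta < \pi/2$ does not prevent this. The conclusion survives via an endpoint comparison instead of monotonicity: $\cos(\psi - \beta) - \cos(\psi + \beta) = 2\sin\psi\sin\beta \geq 0$, so the endpoint nearest $\vec{\theta}^*_\D$ remains optimal, with equality only when $\psi = \pi$, in which case \emph{both} endpoints are best responses and each still satisfies the lemma's conclusions (angle exactly $\beta$ and report orthogonal to the resulting aggregate). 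This regime is not vacuous, so you cannot simply exclude it: the paper's proof of Theorem~\ref{thm:eq-conditions} (its second case) invokes this lemma precisely in the configuration where $\vec{\theta}'_\A$ is diametrically opposed to $\vec{\theta}^*_\D$, i.e., $\psi = \pi$, and the two-best-responses degeneracy there is exactly what that proof exploits.
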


Now, we show that a pure strategy Nash equilibrium does not always exist.

\begin{theorem}
\label{thm:eq-conditions}
For $G$ to have a pure strategy Nash equilibrium, it must be the case that
\begin{equation*}
 \measuredangle \left(\vec{\theta}^*_\A, \vec{\theta}^*_\D \right) < \pi - \sin^{-1}{\left(\frac{\alpha}{1-\alpha}\right)}.
\end{equation*}
\end{theorem}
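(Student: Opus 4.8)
The plan is to argue by contradiction. Suppose a pure strategy Nash equilibrium $(\vec{\theta}'_\A, \vec{\theta}'_\D)$ exists; I will show $\measuredangle(\vec{\theta}^*_\A, \vec{\theta}^*_\D)$ cannot reach $\pi - \sin^{-1}(\alpha/(1-\alpha))$. First I would import everything the earlier lemmas pin down about an equilibrium: by Lemma \ref{cor:maj-consensus} the aggregate equals the majority's truth, $\vec{\theta}_C = \vec{\theta}^*_\A$, and by Lemma \ref{lem:min-orth-eq} the minority's report satisfies $\vec{\theta}'_\D \perp \vec{\theta}_C$ with $\measuredangle(\vec{\theta}_C, \vec{\theta}'_\A) = \beta$, where I abbreviate $\beta := \sin^{-1}(\alpha/(1-\alpha))$. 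Restricting to the plane through the origin, $\vec{\theta}^*_\A$, and $\vec{\theta}^*_\D$ (Appendix \ref{appendix:two-dim}), I place $\vec{\theta}_C = \vec{\theta}^*_\A$ at angle $0$, so that $\vec{\theta}'_\A$ sits at angle $\pm\beta$ and $\vec{\theta}^*_\D$ at angle $\gamma := \measuredangle(\vec{\theta}^*_\A, \vec{\theta}^*_\D)$.

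The quick route to the bound is to observe that, since $\vec{\theta}_C = \vec{\theta}^*_\A$ at equilibrium, the minority's realized payoff corresponds to the angle $\measuredangle(\vec{\theta}_C, \vec{\theta}^*_\D) = \gamma$; because $\vec{\theta}'_\D$ is a best response to $\vec{\theta}'_\A$, this $\gamma$ is the smallest angle the minority can induce, and Lemma \ref{cor:min-ub} guarantees some minority report reaching angle at most $\pi - \beta$, so $\gamma \leq \pi - \beta$. To see the mechanism explicitly (and to set up the strict inequality), I would instead reason with the minority's reachable set: by Lemma \ref{lem:minority-ort} every induced $\vec{\theta}_C$ lies within angle $\beta$ of $\vec{\theta}'_\A$, the extremes being attained exactly when $\vec{\theta}_\D \perp \vec{\theta}_C$, so by continuity the minority can realize the whole arc of half-width $\beta$ about $\vec{\theta}'_\A$. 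Since the minority's utility is increasing in $\cos\measuredangle(\vec{\theta}_C, \vec{\theta}^*_\D)$ (Proposition \ref{thm:agreement-prob}), its best response is the arc point nearest $\vec{\theta}^*_\D$, and the equilibrium conditions force this to be the endpoint $\vec{\theta}^*_\A$. Comparing $\vec{\theta}^*_\A$ with the opposite endpoint, which lies at angle $2\beta$ from $\vec{\theta}^*_\A$ on the far side of $\vec{\theta}'_\A$, a short computation of the two angular distances to $\vec{\theta}^*_\D$ shows that once $\gamma > \pi - \beta$ the opposite endpoint is strictly closer, so the minority could strictly raise its payoff---contradicting both that $\vec{\theta}^*_\A$ is a best response and (through Lemma \ref{cor:maj-consensus}) that the profile is an equilibrium at all.

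The step I expect to be the real obstacle is the knife-edge $\gamma = \pi - \beta$, where strictness actually lives. There the two arc endpoints are equidistant from $\vec{\theta}^*_\D$, so the deviation above yields only a tie and the contradiction argument delivers merely $\gamma \leq \pi - \beta$. Upgrading to the strict ``$<$'' requires a separate treatment of this degenerate configuration: I would check whether, at $\gamma = \pi - \beta$, a minority report that keeps $\vec{\theta}_C = \vec{\theta}^*_\A$ can be a best response while the majority simultaneously best-responds via Lemma \ref{thm:any-consensus}, i.e.\ whether the equality case of Lemma \ref{cor:min-ub} is attainable at a \emph{mutual} best response. This is precisely the point where one must decide whether the tight necessary condition is the strict ``$<$'' of the statement or the weak ``$\leq$'' that the bulk argument produces, and it is where I would concentrate the most care.
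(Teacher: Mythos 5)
Your treatment of the regime $\measuredangle\left(\vec{\theta}^*_\A, \vec{\theta}^*_\D\right) > \pi - \sin^{-1}\left(\frac{\alpha}{1-\alpha}\right)$ is exactly the paper's case 1: Lemma \ref{cor:maj-consensus} pins $\vec{\theta}_C = \vec{\theta}^*_\A$ at any equilibrium, and Lemma \ref{cor:min-ub} (equivalently, your reachable-arc argument) shows the minority could induce an aggregate strictly closer to $\vec{\theta}^*_\D$, so its equilibrium report would not be a best response---a contradiction. That portion is correct and matches the paper.

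The genuine gap is the knife edge $\measuredangle\left(\vec{\theta}^*_\A, \vec{\theta}^*_\D\right) = \pi - \sin^{-1}\left(\frac{\alpha}{1-\alpha}\right)$, which you explicitly defer: since the theorem asserts a strict inequality, a proof that stops at the weak bound does not prove the stated result. For comparison, the paper's case 2 argues: by Lemmas \ref{cor:maj-consensus} and \ref{lem:min-orth-eq}, any equilibrium has $\vec{\theta}_C = \vec{\theta}^*_\A$ with $\vec{\theta}'_\A$ offset by $\sin^{-1}\left(\frac{\alpha}{1-\alpha}\right)$ on the side away from $\vec{\theta}^*_\D$, so $\vec{\theta}'_\A$ is diametrically opposed to $\vec{\theta}^*_\D$; the minority then has exactly two best responses (the two orthogonal pulls), only one of which returns $\vec{\theta}_C = \vec{\theta}^*_\A$, and the paper declares this a contradiction ``because $\vec{\theta}_C$ does not necessarily match $\vec{\theta}^*_\A$.'' Your instinct to concentrate care here is well placed, because that closing step is the weakest link: under the paper's own definition of pure Nash equilibrium, multiplicity of best responses is not a contradiction. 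The profile in which the minority plays the particular orthogonal report that \emph{does} restore $\vec{\theta}_C = \vec{\theta}^*_\A$ consists of mutual best responses---the majority attains its maximum payoff, and the minority is indifferent between its two optimal pulls, both yielding payoff $-\cos\left(\sin^{-1}\left(\frac{\alpha}{1-\alpha}\right)\right)$---so it satisfies the equilibrium definition, and the paper's argument does not actually exclude it. In short: your bulk argument is sound and coincides with the paper's; the strictness you could not supply is supplied by the paper only through a step that does not withstand scrutiny, and the conclusion both arguments legitimately support is the weak inequality $\measuredangle\left(\vec{\theta}^*_\A, \vec{\theta}^*_\D\right) \leq \pi - \sin^{-1}\left(\frac{\alpha}{1-\alpha}\right)$.
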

\begin{proof}
Suppose, via contradiction, there exists a pure strategy Nash equilibrium and $\measuredangle \left(\vec{\theta}^*_\A, \vec{\theta}^*_\D \right) \geq \pi - \sin^{-1}{\left(\frac{\alpha}{1-\alpha}\right)}$. Two cases are possible:
\begin{enumerate}
    \item $\measuredangle \left(\vec{\theta}^*_\A, \vec{\theta}^*_\D \right) > \pi - \sin^{-1}{\left(\frac{\alpha}{1-\alpha}\right)}$;
    \item $\measuredangle \left(\vec{\theta}^*_\A, \vec{\theta}^*_\D \right) = \pi - \sin^{-1}{\left(\frac{\alpha}{1-\alpha}\right)}$.
\end{enumerate}

First, consider the case where $\measuredangle \left(\vec{\theta}^*_\A, \vec{\theta}^*_\D \right) > \pi - \sin^{-1}{\left(\frac{\alpha}{1-\alpha}\right)}$. By Lemma \ref{cor:maj-consensus}, $\vec{\theta}^*_\A = \vec{\theta}_C$ in this equilibrium. However, by Lemma \ref{cor:min-ub}, $\measuredangle \left(\vec{\theta}_C, \vec{\theta}^*_\D \right) \leq \pi - \sin^{-1}{\left(\frac{\alpha}{1-\alpha}\right)}$ given any response from the {\adv} group. This means the {\dis} group will report a best response $\vec{\theta}'_\D$ that does not yield $\vec{\theta}^*_{\A}$. This in turn raises a contradiction and indicates that such an equilibrium cannot exist if $\measuredangle \left(\vec{\theta}^*_\A, \vec{\theta}^*_\D \right) > \pi - \sin^{-1}{\left(\frac{\alpha}{1-\alpha}\right)}$.

Second, consider the case where $\measuredangle \left(\vec{\theta}^*_\A, \vec{\theta}^*_\D \right) = \pi - \sin^{-1}{\left(\frac{\alpha}{1-\alpha}\right)}$. By Lemma \ref{cor:maj-consensus}, $\vec{\theta}^*_\A = \vec{\theta}_C$ in this equilibrium. Because $\vec{\theta}^*_\A = \vec{\theta}_C$, by Lemma \ref{lem:min-orth-eq}, $\measuredangle \left(\vec{\theta}'_\A, \vec{\theta}^*_\A \right) = \left(\vec{\theta}'_\A, \vec{\theta}_C \right) = \sin^{-1}{\left(\frac{\alpha}{1-\alpha}\right)}$. Therefore $\measuredangle \left(\vec{\theta}'_{\A}, \vec{\theta}^*_{\D}\right) = \measuredangle \left(\vec{\theta}'_\A, \vec{\theta}^*_\A \right) + \measuredangle \left(\vec{\theta}^*_\A, \vec{\theta}^*_\D \right) = \sin^{-1}{\left(\frac{\alpha}{1-\alpha}\right)} + \pi - \sin^{-1}{\left(\frac{\alpha}{1-\alpha}\right)} = \pi$, and $\vec{\theta}'_{\A}$ is diametrically opposed to $\vec{\theta}^*_{\D}$. Note that there are two best responses for the {\dis} group, both of which pull the aggregate (one by $\sin^{-1}\left(\frac{\alpha}{1-\alpha}\right)$, the other by $-\sin^{-1}\left(\frac{\alpha}{1-\alpha}\right)$) towards $\vec{\theta}^*_\D$ to yield $\measuredangle \left(\vec{\theta}_C, \vec{\theta}^*_\D \right) = \pi - \sin^{-1}{\left(\frac{\alpha}{1-\alpha}\right)}$. However, only \textit{one} yields $\vec{\theta}_C = \vec{\theta}^*_{\A}$. This is a contradiction because $\vec{\theta}_C$ does not necessarily match $\vec{\theta}^*_\A$. Therefore, this is not a pure strategy Nash equilibrium.

Thus, for $G$ to have a pure strategy Nash equilibrium, $\measuredangle \left(\vec{\theta}^*_\A, \vec{\theta}^*_\D \right) < \pi - \sin^{-1}{\left(\frac{\alpha}{1-\alpha}\right)}$.
\end{proof}

Theorem \ref{thm:eq-conditions} reveals that equilibrium does not exist if (i) the groups are close to diametric opposition (in terms of their preference vectors), and (ii)
the groups are close in size.

\subsection{Form of Pure Strategy Nash Equilibrium}

Whenever necessary conditions outlined in Theorem \ref{thm:eq-conditions} are met, the equilibrium takes a certain form. Theorem \ref{thm:nondegen-eq} specifies this form exactly and proves the conditions are also sufficient. 
We denote
$R = \big(\begin{smallmatrix}
  0 & -1\\
  1 & 0
\end{smallmatrix}\big)$ as the $\frac{\pi}{2}$ radians rotation matrix.

\begin{theorem}
\label{thm:nondegen-eq}
If $\measuredangle \left(\vec{\theta}^*_\A, \vec{\theta}^*_\D \right) < \pi - \sin^{-1}{\left(\frac{\alpha}{1-\alpha}\right)}$, then there exists a pure strategy Nash equilibrium. Specifically, ($\vec{\theta}'_\A, \vec{\theta}'_\D$) form a pure strategy Nash equilibrium iff:
\begin{enumerate}
    
    \item \label{it:best-response-vals} $\vec{\theta}'_\A$ and $\vec{\theta}'_\D$ take the following form:
    \begin{align*}
        \vec{\theta}'_\A &= \frac{\big[(\sqrt{1 - 2\alpha})I - \alpha \sign(\vec{\theta}^*_\D \cdot R \vec{\theta}^*_{\A})R\big]\vec{\theta}^*_\A}{1 - \alpha},
        \\
        \vec{\theta}'_\D &= \sign(\vec{\theta}^*_\D \cdot R \vec{\theta}^*_{\A})R \vec{\theta}^*_{\A}.
    \end{align*}
    \item $\measuredangle \left(\vec{\theta}'_\A, \vec{\theta}'_\D \right) = \sin^{-1}{\left(\frac{\alpha}{1-\alpha}\right)} + \frac{\pi}{2}$. That is, players' equilibrium strategies will always point in opposing directions.
\end{enumerate}
\end{theorem}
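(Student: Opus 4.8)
The plan is to establish the two directions of the ``iff'' and then verify the angle in item~2 by a direct inner-product computation. Throughout I would fix coordinates with $\vec{\theta}^*_\A$ at angle $0$ and $\vec{\theta}^*_\D$ at angle $\phi := \measuredangle(\vec{\theta}^*_\A, \vec{\theta}^*_\D) \in (0,\pi)$, and abbreviate $\beta := \sin^{-1}(\frac{\alpha}{1-\alpha})$, so the hypothesis reads $\phi < \pi - \beta$. Since we already know from Lemma~\ref{cor:maj-consensus} and Lemma~\ref{lem:min-orth-eq} a great deal about the structure of any equilibrium, much of the necessity direction is bookkeeping, and the real work lies in the sufficiency direction.

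\emph{Necessity (any pure NE has the stated form).} Suppose $(\vec{\theta}'_\A, \vec{\theta}'_\D)$ is a pure strategy Nash equilibrium. By Lemma~\ref{cor:maj-consensus} the aggregate satisfies $\vec{\theta}_C = \vec{\theta}^*_\A$, and by Lemma~\ref{lem:min-orth-eq} we have $\vec{\theta}'_\D \perp \vec{\theta}_C = \vec{\theta}^*_\A$ together with $\measuredangle(\vec{\theta}_C, \vec{\theta}'_\A) = \beta$. In two dimensions the only unit vectors orthogonal to $\vec{\theta}^*_\A$ are $\pm R\vec{\theta}^*_\A$; the sign is pinned down by the requirement that $\vec{\theta}'_\D$ be a best response, which (by the distance comparison carried out below) forces the minority to pull the aggregate towards $\vec{\theta}^*_\D$, i.e. to pick the orthogonal direction with positive projection onto $\vec{\theta}^*_\D$, giving $\vec{\theta}'_\D = \sign(\vec{\theta}^*_\D \cdot R\vec{\theta}^*_\A)R\vec{\theta}^*_\A$. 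Substituting this vector---for which $\vec{\theta}'_\D \cdot \vec{\theta}^*_\A = 0$---into Equation~\ref{eq:maj-group-any-direction} of Lemma~\ref{thm:any-consensus} collapses the square root to $\sqrt{1 - 2\alpha}$ and yields exactly the claimed closed form for $\vec{\theta}'_\A$; a one-line check gives $\|\vec{\theta}'_\A\| = 1$.

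\emph{Sufficiency/existence.} Conversely, assuming $\phi < \pi - \beta$, I would define $\vec{\theta}'_\A, \vec{\theta}'_\D$ by the displayed formulas and verify each is a best response to the other. The majority side is immediate: by construction $\vec{\theta}'_\A$ produces $\vec{\theta}_C = \vec{\theta}^*_\A$ (Lemma~\ref{thm:any-consensus}), attaining the maximal payoff $\vec{\theta}_C \cdot \vec{\theta}^*_\A = 1$, so $\vec{\theta}'_\A \in BR_\A(\vec{\theta}'_\D)$. For the minority side I would invoke Lemma~\ref{lem:minority-ort}: as $\vec{\theta}_\D$ ranges over $S_\D$, the reachable aggregate directions form the closed arc of angular half-width $\beta$ centered at $\vec{\theta}'_\A$, which in these coordinates is $[-2\beta, 0]$ (since the formula for $\vec{\theta}'_\A$ places it at angle $-\beta$), and whose endpoint at angle $0$ is $\vec{\theta}^*_\A$.

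The hard part---and the step where the exact threshold $\pi-\beta$ enters---is confirming that the reachable direction closest to $\vec{\theta}^*_\D$ (angle $\phi$) is precisely the endpoint $\vec{\theta}^*_\A$, which is what $\vec{\theta}'_\D$ achieves. I would analyze the angular-distance function $f(\psi) = \measuredangle(\psi, \phi)$ over $\psi \in [-2\beta, 0]$: wherever $\phi - \psi \le \pi$ it equals $\phi - \psi$ and is minimized at $\psi = 0$, so the only threat to optimality of $\psi=0$ is wraparound at the far endpoint $\psi = -2\beta$, where $f(-2\beta) = 2\pi - \phi - 2\beta$ once $\phi + 2\beta > \pi$. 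Comparing $f(0) = \phi$ with $f(-2\beta)$ shows $\psi=0$ is the strict minimizer exactly when $\phi < \pi - \beta$, which is the theorem's hypothesis; hence $\vec{\theta}^*_\A$ is the unique optimal aggregate and $\vec{\theta}'_\D \in BR_\D(\vec{\theta}'_\A)$, completing existence (and justifying the sign choice used in necessity). Finally, for item~2 I would compute $\vec{\theta}'_\A \cdot \vec{\theta}'_\D$ directly: since $\vec{\theta}'_\D = \sign(\vec{\theta}^*_\D \cdot R\vec{\theta}^*_\A)R\vec{\theta}^*_\A$ is orthogonal to $\vec{\theta}^*_\A$ while $(R\vec{\theta}^*_\A)\cdot(R\vec{\theta}^*_\A) = 1$, only the $R\vec{\theta}^*_\A$ term survives and $\vec{\theta}'_\A \cdot \vec{\theta}'_\D = -\frac{\alpha}{1-\alpha}$, giving $\measuredangle(\vec{\theta}'_\A, \vec{\theta}'_\D) = \cos^{-1}(-\frac{\alpha}{1-\alpha}) = \frac{\pi}{2} + \sin^{-1}(\frac{\alpha}{1-\alpha})$.
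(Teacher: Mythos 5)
Your proposal is correct, and on the necessity direction it follows the paper's own skeleton exactly: Lemma \ref{cor:maj-consensus} forces $\vec{\theta}_C = \vec{\theta}^*_\A$, Lemma \ref{lem:min-orth-eq} forces $\vec{\theta}'_\D \perp \vec{\theta}^*_\A$ (hence $\vec{\theta}'_\D = \pm R\vec{\theta}^*_\A$, with the sign pinned by best-responseness), and Equation \ref{eq:maj-group-any-direction} with the square root collapsing to $\sqrt{1-2\alpha}$ gives the closed form for $\vec{\theta}'_\A$, just as in Appendix \ref{appendix:eq-alg}. Where you genuinely depart from the paper is the sufficiency/existence direction. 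The paper's proof essentially derives the candidate pair from its necessary-condition lemmas and asserts it is an equilibrium, leaning on the phrase that the minority ``will always report the unique vector orthogonal to $\vec{\theta}_C$ that maximizes agreement''; it never explicitly verifies that $\vec{\theta}'_\D$ is a best response to $\vec{\theta}'_\A$, which is precisely the step where the hypothesis $\measuredangle(\vec{\theta}^*_\A,\vec{\theta}^*_\D) < \pi - \sin^{-1}\left(\frac{\alpha}{1-\alpha}\right)$ must be invoked. Your arc argument fills this in: placing $\vec{\theta}'_\A$ at angle $-\beta$, identifying the reachable aggregates (via Lemma \ref{lem:minority-ort} plus continuity) as the closed arc $[-2\beta,0]$, and comparing the angular distances $f(0)=\phi$ against the wraparound value $f(-2\beta)=2\pi-\phi-2\beta$ shows that the endpoint $\vec{\theta}^*_\A$ is the \emph{strict} optimum exactly when $\phi < \pi - \beta$. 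This makes visible why the threshold in Theorem \ref{thm:eq-conditions} is also sufficient, something the paper's write-up leaves implicit, and it simultaneously justifies the sign choice used in necessity. Your treatment of item 2 (direct computation $\vec{\theta}'_\A \cdot \vec{\theta}'_\D = -\frac{\alpha}{1-\alpha}$, then $\cos^{-1}\left(-\frac{\alpha}{1-\alpha}\right) = \frac{\pi}{2} + \sin^{-1}\left(\frac{\alpha}{1-\alpha}\right)$) differs cosmetically from the paper's angle-addition route through Lemma \ref{lem:min-orth-eq}, but both are sound. In short: same structural lemmas, but your proof is the more complete one on the existence claim.
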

\begin{proof}
Because $\measuredangle \left(\vec{\theta}^*_\A, \vec{\theta}^*_\D \right) < \pi - \sin^{-1}{\left(\frac{\alpha}{1-\alpha}\right)}$, the {\dis} group will always report the unique vector $\vec{\theta}'_\D$ orthogonal to $\vec{\theta}_C$ (as per Lemma \ref{lem:min-orth-eq}) that maximizes agreement between $\vec{\theta}_C$ and $\vec{\theta}^*_\D$. By Lemma \ref{cor:maj-consensus}, $\vec{\theta}_C = \vec{\theta}^*_\A$ at equilibrium, meaning that $\vec{\theta}'_\D$ is orthogonal to $\vec{\theta}^*_\A$. $\vec{\theta}'_\D$ is therefore either $R \vec{\theta}^*_\A$ or $-R \vec{\theta}^*_\A$.

If $\vec{\theta}^*_\D \cdot R \vec{\theta}^*_{\A} > 0$, then the {\dis} group should report $\vec{\theta}'_\D = R \vec{\theta}^*_\A$ to maximize their utility. Otherwise, the {\dis} group should report $\vec{\theta}'_\D = -R \vec{\theta}^*_\A$. In either case, $\vec{\theta}'_\D = \sign(\vec{\theta}^*_\D \cdot R \vec{\theta}^*_{\A})R \vec{\theta}^*_\A$. (Note that $\sign(\vec{\theta}^*_\D \cdot R \theta^*_{\A}) = 0$ if and only if the two groups are in total agreement or are diametrically opposed. The proof preconditions that the groups disagree and that $\measuredangle \left(\vec{\theta}^*_\A, \vec{\theta}^*_\D \right) < \pi - \sin^{-1}{\left(\frac{\alpha}{1-\alpha}\right)}$ prevent this result from occurring.)

Using Equation \ref{eq:maj-group-any-direction} with $\vec{\theta}'_\D$ and $\vec{\theta}^*_\A$ yields $\vec{\theta}'_\A$ and $\vec{\theta}'_\D$ in item 1. Appendix \ref{appendix:eq-alg} shows the steps involved in this process. Moreover, item 2 follows from Lemma \ref{lem:min-orth-eq} because $\measuredangle \left(\vec{\theta}'_\A, \vec{\theta}'_\D \right) = \measuredangle \left(\vec{\theta}'_\A, \vec{\theta}'_C \right) + \measuredangle \left(\vec{\theta}'_C, \vec{\theta}'_\D \right) = \sin^{-1}{\left(\frac{\alpha}{1-\alpha}\right)} + \frac{\pi}{2}$. See Figure \ref{fig:eq-result} for an illustration of all items.
\end{proof}

\begin{figure}[H]
    \centering
    \includegraphics[width=0.34\textwidth]{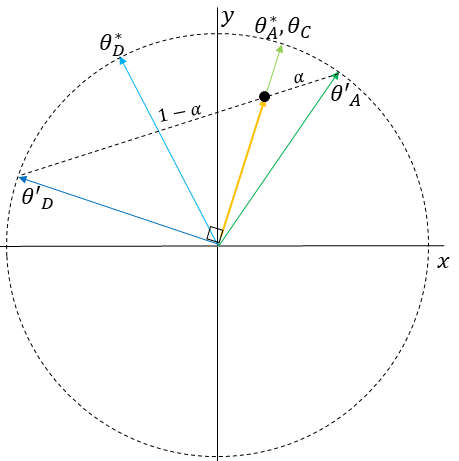}
    
    \caption{Example pure strategy Nash equilibrium with best responses $\vec{\theta}'_{\A}$ and $\vec{\theta}'_{\D}$ given minority population fraction $\alpha$, and true preference vectors $\vec{\theta}^*_\A$, and $\vec{\theta}^*_\D$. Note that $\vec{\theta}'_{\D} \perp \vec{\theta}_{C}$ and the aggregate vector $\vec{\theta}_C$ points in the same direction as $\vec{\theta}^*_{\A}$.}
    \label{fig:eq-result}
    \vspace{-5mm}
\end{figure}

\section{Discussion}
\label{sec:discussion}
Our analysis of the
averaging mechanism shows the following:
(a) even when groups are truthful,
concessions to the {\dis} group 
are less than proportional 
to their share of the population;
(b) this sub-proportionality 
depends on the cosine similarity 
between the true preference vectors;
and (c) if participants respond strategically,
 tyranny of the majority results 
whenever an equilibrium exists. 
We now build on this foundation by exploring:
(i) how our results obtain 
even without intragroup collusion;
(ii) the benefits and pitfalls 
of other aggregation mechanisms, 
(iii) more general implications 
for computational social choice 
and participatory ML algorithms.

\xhdr{No intra-group collusion required for results}
\label{subsec:no-collusion}
All of our analysis up to this point utilized a setup 
in which each group acts as a monolithic player.
However, we now informally argue that collusion is \emph{not} required.
Namely, as long as individuals are aware of the aggregate vector's direction and the relative sizes of each group, no collusion is necessary.
\begin{pfsketch}
Suppose colluding and not colluding result in different aggregate vectors.
Decomposing each aggregate vector into contributions from the {\dis} and {\adv}
group members (i.e. for a given group, summing the individual preference vectors of
members from that group and dividing by the number of members)
will result in different contributing vectors from each group. 
The contributing vectors that stem from collusion are guaranteed to maximize
the utility of each group. However, this means at least one of the contributing vectors
in the other set (resulting from independent contributions) does \textit{not} maximize
the utility of its corresponding group because it does not match the vector for that
same group via collusion. 
However, this raises a contradiction because maximizing group utility 
involves maximizing each group member's individual utility, 
regardless of whether collusion is involved. 
If maximal group utility is not achieved, 
at least one member is not maximizing their own utility. 
This cannot possibly happen because we assume 
that all individuals are rational actors 
seeking to maximize their utility, 
so this is an impossible case.
Therefore, no collusion is required to obtain
the results described in previous subsections.
\end{pfsketch}

\xhdr{Median-based approaches: geometric medians and coordinate-wise medians} \citet{el2021strategyproofness} mentions 
that average-based aggregation mechanisms
are susceptible to manipulation even by an individual.
In response, they describe two types 
of median-based aggregation approaches 
that generalize to $d$ dimensions: 
geometric medians and coordinate-wise medians.
Given a set of preference vectors, 
the geometric median is a vector that minimizes 
the Euclidean distance to each vector in the set, 
and the coordinate-wise median is a vector 
whose $i$th component is the one-dimensional median 
computed from the $i$th components of all provided vectors. 
The main result of \citet{el2021strategyproofness} 
is that the geometric median is not generally strategy-proof,
and they point to existing work \citep{sui2015approximately}
that proves the coordinate-wise median 
is strategy-proof but not group strategy-proof. 
In our setup, the coordinate-wise median 
is both incentive-compatible and stable,
but it does not alleviate the dominance of the majority's voice. In fact, 
this mechanism might be considered \emph{worse},
because even when both groups report preferences truthfully,
the majority group always prevails.\footnote{It may also possible to transform the 
two-dimensional problem on a circle 
into a one-dimensional problem on a line, 
where the single-peaked preferences result 
of \citet{moulin1980strategy} can be applied.}

\xhdr{Randomized Dictatorship}
In contrast to median-based approaches, 
the method of randomized dictatorship solves 
all problems related to averaging aggregation.
In this method, the preferences of an individual
selected at random from the population 
are used directly as the aggregated result. 
As reported in \citet{zeckhauser1973voting},
it is strategy-proof and \emph{probabilistically linear}.
In the setting we consider, participants 
would not be incentivized to lie 
(because any participant's reported preferences could be applied to everyone), 
and concessions to the {\dis} group are proportional instead of sub-proportional 
(because the {\dis} group preferences will be selected with probability $\alpha$ 
and the {\adv} group preferences will be selected with probability $1 - \alpha$). 
Thus, the randomized dictatorship mechanism is both 
incentive-compatible and proportional. 

In the economics literature \citep{gibbard1973manipulation, gibbard1977manipulation},
researchers have contemplated such mechanisms 
but have argued that they are ``unattractive'' 
despite being strategy-proof 
because they ``[leave] too much to chance'' and ignore
input from all individuals except the one selected at random.
\citet{zeckhauser1973voting} also posits 
that using the preferences of one over many 
may not be appropriate when dealing 
with ``momentous social decisions''. 
Deciding the ethics of autonomous vehicles 
may be one such decision. 
\citet{conitzer2015crowdsourcing} additionally notes that one 
may be more confident in their colleagues' preferences
than those of a random member of the population,
so it may not meet requirements of procedural fairness 
despite giving proportional voice to the minority in expectation.

\xhdr{Additional Considerations}
\citet{zeckhauser1973voting} proves that 
``No voting system that relies on individuals' self-interested balloting 
can guarantee a nondictatorial outcome 
that is Pareto-optimal.'' 
In their conclusion, they note 
that while this is a pessimistic result, 
``perhaps we should not ask despairingly, Where
do we go from here?, but rather inquire, Have we
been employing the appropriate mind-set all along?''
In other words, the correct approach may not be to build 
a ``one-size-fits-all'' strategy-proof voting scheme
but rather something that holistically considers all preferences 
regarding a given social issue and surrounding context. 
\citet{conitzer2015crowdsourcing} notes that 
participation requires context and locality,
both of which are lost when we crowdsource moral dilemmas (as in \citet{awad2018moral}). 
\citet{pugnetti2018customer} echoes this message; they find that Swiss residents had different preferences 
about autonomous vehicle ethics relative to those of other countries. 
Additionally, \citet{conitzer2015crowdsourcing} calls attention 
to the importance of the featurization process of the alternatives. 
If these alternatives are not represented properly,
the elicitation and aggregation processes
cannot hope to arrive at a result
that accurately reflects participants' true judgments.
Moreover, \citet{landemore2015deliberation} and \citet{pierson2017demographics}
suggest deliberation and discussion used alongside voting can lead to better overall outcomes than voting alone.

\xhdr{Conclusion}
The central impulse of participatory machine learning is
to integrate input from various stakeholders directly
into the process of developing machine learning systems.
For all its promise, participatory ML
also raises challenging questions
about the precise form that 
such an integration should take. 
Our work highlights some of the challenges 
of designing such mechanisms,
especially when individuals may hold 
radically different values and act strategically. 
While we draw heavily on the previous literature 
in economics and computational social science,
our work also reveals that some of the ways of combining
preferences that seem natural from a machine learning perspective
can be unstable and lead to strange strategic behavior
in ways that do not map so neatly 
onto known analyses. 
Surprisingly, while preference elicitation 
has gained considerable attention 
in the participatory ML literature,
few papers address cases in which 
stakeholders hold genuinely conflicting values. 
While some part of this work going forward 
will surely be to study such mechanisms formally,
we also stress that better mechanism design 
is no panacea for reconciling conflicting values 
in the real world.
Beyond theoretical analysis, participatory ML systems
may also require channels by which communication,
debate, and reconciliation of competing values
could potentially take place.


%

\section*{Acknowledgements}
Authors acknowledges support from NSF (IIS2040929) and PwC (through the Digital Transformation
and Innovation Center at CMU). Any opinions, findings, conclusions, or recommendations expressed in this material are those of the authors and do not reflect the views of the National Science Foundation and other funding agencies.
The first author was partially supported by a GEM Fellowship and an ARCS Scholarship.


\bibliographystyle{apa-good}
\bibliography{refs}

\clearpage
\appendix
\section{Concrete Example of Preference Vectors and Alternatives}
\label{appendix:vector-example}
\begin{figure}[H]
    \centering
    \includegraphics[width=0.7\textwidth]{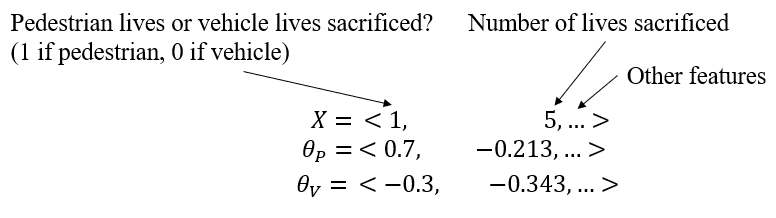}
    \caption{Examples of vectors used in a self-driving car ethics preference elicitation mechanism. $X$ is a vector representing an alternative with features corresponding to sacrificing five pedestrian lives (among other properties not displayed here). $\vec{\theta}_P$ is a preference vector of an individual who prioritizes sacrificing pedestrian lives over vehicle passenger lives. As a result, the vector has positive weight corresponding to that feature in the alternative representation. In contrast, $\vec{\theta}_V$ is a preference vector of an individual who prioritizes sacrificing vehicle passenger lives because the vector has negative weight corresponding to that feature. However, both individuals prefer sacrificing fewer lives because both have negative weights corresponding to the feature representing number of lives sacrificed.}
    \label{fig:mm-vectors}
\end{figure}

\section{Verbose Formulation of Equation \ref{eq:conditional-minority-agreement}}
\label{appendix:verbose-eq}

Equation \ref{eq:conditional-minority-agreement},
\begin{equation*}
    \mathbb{P}[\textrm{sign}({\vec{\theta}_C} \cdot (\vecl{y} - \vecl{x})) = \textrm{sign}({\vec{\theta}^*_\D} \cdot (\vecl{y} - \vecl{x}))\: |\:  
    \textrm{sign}({\vec{\theta}^*_\A} \cdot (\vecl{y} - \vecl{x})) \neq \textrm{sign}({\vec{\theta}^*_\D} \cdot (\vecl{y} - \vecl{x}))],
\end{equation*}
or the probability that the aggregate vector agrees with the {\dis} group's true preference vector 
for a random ordering \textit{given} that the true preference vectors
do not agree for that ordering, 
can also be written as
\begin{equation*}
        \frac{\mathbb{P}[\textrm{sign}({\vec{\theta}_C} \cdot (\vecl{y} - \vecl{x})) = \textrm{sign}({\vec{\theta}^*_\D} \cdot (\vecl{y} - \vecl{x})),
        \textrm{sign}({\vec{\theta}_C} \cdot (\vecl{y} - \vecl{x})) \neq \textrm{sign}({\vec{\theta}^*_\A} \cdot (\vecl{y} - \vecl{x}))]}{\mathbb{P}[\textrm{sign}({\vec{\theta}^*_\A} \cdot (\vecl{y} - \vecl{x})) \neq \textrm{sign}({\vec{\theta}^*_\D} \cdot (\vecl{y} - \vecl{x}))]}.
\end{equation*}
Note that this can be simplified further to
\begin{equation*}
    \frac{\mathbb{P}[
        \textrm{sign}({\vec{\theta}_C} \cdot (\vecl{y} - \vecl{x})) \neq \textrm{sign}({\vec{\theta}^*_\A} \cdot (\vecl{y} - \vecl{x}))]}{\mathbb{P}[\textrm{sign}({\vec{\theta}^*_\A} \cdot (\vecl{y} - \vecl{x})) \neq \textrm{sign}({\vec{\theta}^*_\D} \cdot (\vecl{y} - \vecl{x}))]},
\end{equation*}
because if the aggregate does not agree with the majority group's preference vector, it \textit{must} agree with the minority group's preference vector, but the reverse is not necessarily true. Therefore,
disagreement with the majority group's preference vector is a subset of agreement with the minority group's preference vector.
One can determine all of the quantities above via the equation in Proposition \ref{thm:agreement-prob} and the definition of $\vec{\theta}_C$ with $\vec{\theta}_{\A} = \vec{\theta}^*_{\A}$ and $\vec{\theta}_{\D} = \vec{\theta}^*_{\D}$. This yields
\begin{align*}
    &\frac{1 - \frac{\pi - \cos^{-1}(\vec{\theta}_C \cdot \vec{\theta}^*_\A)}{\pi}}{1 - \frac{\pi - \cos^{-1}(\vec{\theta}^*_\D \cdot \vec{\theta}^*_\A)}{\pi}},\\
    &=\frac{\frac{\cos^{-1}(\vec{\theta}_C \cdot \vec{\theta}^*_\A)}{\pi}}{\frac{ \cos^{-1}(\vec{\theta}^*_\D \cdot \vec{\theta}^*_\A)}{\pi}},\\
    &=\frac{\cos^{-1}(\vec{\theta}_C \cdot \vec{\theta}^*_\A)}{\cos^{-1}(\vec{\theta}^*_\D \cdot \vec{\theta}^*_\A)}.
\end{align*}

\section{Pure Strategy Nash Equilibrium Contained in 2D Plane: Formal Statement and Proof}
\label{appendix:two-dim}

\begin{proposition}
\label{proposition:two-dim}
Suppose $\vec{\theta}^*_i \in \mathbb{R}^d$ for each player $i \in \{\A,\D\}$, where $d > 2$, and $||\vec{\theta}^*_i||_2 = 1$. 
If a pure strategy Nash equilibrium
$(\vec{\theta}'_{\A},\vec{\theta}'_{\D})$ exists for this game, then the equilibrium $\vec{\theta}'_{\A},\vec{\theta}'_{\D}$ must be contained in $\Pi$, where $\Pi$ is the $2$-dimensional subspace of $\mathbb{R}^d$ spanned by $\vec{\theta}^*_{\A}, \vec{\theta}^*_{\D}$.
\end{proposition}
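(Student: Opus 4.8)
The plan is to work throughout with the orthogonal decomposition of every vector with respect to $\Pi$, writing $\vec{\theta} = \vec{\theta}^{\parallel} + \vec{\theta}^{\perp}$ where $\vec{\theta}^{\parallel}$ is the projection onto $\Pi = \mathrm{span}(\vec{\theta}^*_\A, \vec{\theta}^*_\D)$ and $\vec{\theta}^{\perp} \perp \Pi$. Since both true preference vectors lie in $\Pi$, each payoff $u_i(\vec{\theta}_\A,\vec{\theta}_\D) = \vec{\theta}_C \cdot \vec{\theta}^*_i = \cos\measuredangle(\vec{\theta}^*_i, \vec{\theta}_C)$ sees only the direction of the aggregate relative to a vector in $\Pi$. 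The goal is to show that, at any equilibrium $(\vec{\theta}'_\A, \vec{\theta}'_\D)$, both orthogonal components $\vec{\theta}'^{\perp}_\A$ and $\vec{\theta}'^{\perp}_\D$ vanish.

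The central step is a best-response containment sub-lemma: for a \emph{fixed} opponent strategy $\vec{\theta}_{-i}$, every best response of player $i$ lies in $V_i := \mathrm{span}(\vec{\theta}_{-i}, \vec{\theta}^*_i)$. To prove it, I would note that as $\vec{\theta}_i$ ranges over the unit sphere, the unnormalized aggregate $w_{-i}\vec{\theta}_{-i} + w_i\vec{\theta}_i$ (with $w_\A = 1-\alpha$, $w_\D = \alpha$) traces a sphere $\Sigma$ of radius $w_i$ centered at $w_{-i}\vec{\theta}_{-i}$, and player $i$ maximizes $\cos\measuredangle(\vec{\theta}^*_i, p)$ over $p \in \Sigma$. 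Because both the center $w_{-i}\vec{\theta}_{-i}$ and the target $\vec{\theta}^*_i$ lie in $V_i$, the whole optimization is invariant under the reflection fixing $V_i$; since the angle-minimizing point of $\Sigma$ is unique, it must be fixed by the reflection and hence lie in $V_i$. I expect this uniqueness claim to be the main obstacle, and in particular the degenerate alignment $\vec{\theta}_{-i} \parallel \vec{\theta}^*_i$ (where $\Sigma$ is rotationally symmetric about the $\vec{\theta}^*_i$-axis and a whole circle of best responses appears) will require a short separate treatment, resolved by invoking the \emph{other} player's best-response condition below.

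To finish, I would combine this with the $d$-dimensional analogue of Lemma~\ref{cor:maj-consensus}: Equation~\eqref{eq:maj-group-any-direction} is a dimension-free vector identity, so for any $\vec{\theta}_\D$ the majority can realize $\vec{\theta}_C = \vec{\theta}^*_\A$ and thus attain $u_\A = 1$; hence in equilibrium $\vec{\theta}_C = \vec{\theta}^*_\A \in \Pi$. Projecting the averaging identity $(1-\alpha)\vec{\theta}'_\A + \alpha\vec{\theta}'_\D = \lambda\,\vec{\theta}^*_\A$ (with $\lambda = \lVert (1-\alpha)\vec{\theta}'_\A + \alpha\vec{\theta}'_\D\rVert \geq 1-2\alpha > 0$) onto $\Pi^{\perp}$ gives $(1-\alpha)\vec{\theta}'^{\perp}_\A + \alpha\vec{\theta}'^{\perp}_\D = \vec{0}$. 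Applying the sub-lemma to the minority, $\vec{\theta}'_\D = s\,\vec{\theta}'_\A + t\,\vec{\theta}^*_\D$, so $\vec{\theta}'^{\perp}_\D = s\,\vec{\theta}'^{\perp}_\A$, whence $[(1-\alpha) + \alpha s]\,\vec{\theta}'^{\perp}_\A = \vec{0}$.

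Thus either $\vec{\theta}'^{\perp}_\A = \vec{0}$, which by the projected identity forces $\vec{\theta}'^{\perp}_\D = \vec{0}$ as well, placing both strategies in $\Pi$ as desired; or $s = -(1-\alpha)/\alpha$. In the latter case, substituting back yields $(1-\alpha)\vec{\theta}'_\A + \alpha\vec{\theta}'_\D = \alpha t\,\vec{\theta}^*_\D$, which must equal $\lambda\,\vec{\theta}^*_\A$ with $\lambda > 0$; since $\vec{\theta}^*_\A$ and $\vec{\theta}^*_\D$ are linearly independent (the excerpt assumes $\vec{\theta}^*_\A \neq \vec{\theta}^*_\D$ and treats $\Pi$ as genuinely two-dimensional), this is impossible. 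Hence $\vec{\theta}'^{\perp}_\A = \vec{0}$, completing the argument. The remaining care is bookkeeping: verifying the degenerate-alignment cases of the sub-lemma and confirming that $\lambda$ cannot vanish, both of which I expect to be routine once the geometric picture of $\Sigma$ is in place.
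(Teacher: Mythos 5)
Your proposal is correct, and it takes a genuinely different route from the paper's own proof in Appendix~\ref{appendix:two-dim}. The two arguments share a skeleton: both exploit the fact that Equation~\ref{eq:maj-group-any-direction} is dimension-free, so that Lemma~\ref{cor:maj-consensus} pins down $\vec{\theta}_C = \vec{\theta}^*_\A$ at any equilibrium, and both dispose of the case $\vec{\theta}'_\A \in \Pi$ by noting that the averaging identity then forces $\vec{\theta}'_\D \in \Pi$ (your projection onto $\Pi^\perp$ is exactly this step). The divergence is in the hard case $\vec{\theta}'_\A \notin \Pi$. The paper exhibits an explicit profitable deviation for the minority: $\D$ can rotate the aggregate by the angle $\measuredangle\left(\vec{\theta}'_\A, \vec{\theta}^*_\A\right)$ directly toward $\vec{\theta}^*_\D$ instead of toward $\vec{\theta}^*_\A$, and a \emph{strict} spherical triangle inequality (strict precisely because $\vec{\theta}'_\A$, $\vec{\theta}^*_\A$, $\vec{\theta}^*_\D$ are non-coplanar, a point the paper leaves implicit) shows this strictly improves $\D$'s payoff, contradicting equilibrium. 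You never construct a deviation; instead you prove a structural best-response containment lemma via reflection invariance of the optimization over the sphere $\Sigma$, then finish with linear algebra and the linear independence of $\vec{\theta}^*_\A, \vec{\theta}^*_\D$. Your route trades spherical trigonometry for symmetry plus projection, and the containment lemma is a reusable characterization of best responses; the paper's route needs no such structure theorem but hinges on the strictness of an inequality it does not fully justify. Two small repairs to your sketch, both within its spirit: (i) uniqueness should be claimed for the optimal \emph{direction}, not the optimal point of $\Sigma$ --- when the optimal ray pierces $\Sigma$ (e.g., whenever $\vec{\theta}^*_\D$ lies strictly inside the feasible cone of aggregate directions) there are two maximizing points, though both lie on that ray and hence still in $V_i$, so the conclusion survives; and (ii) the degenerate alignment for the minority's sub-lemma is $\vec{\theta}'_\A = -\vec{\theta}^*_\D$, which lies in $\Pi$, so it falls under the easy case where the projected identity alone finishes --- no appeal to the other player's best-response condition is actually needed.
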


\begin{proof}

By Lemma \ref{cor:maj-consensus}, any pure strategy Nash equilibrium will have the property that
\begin{equation*}
    \vec{\theta}_C = \vec{\theta}^*_{\A} = \frac{(1-\alpha)\vec{\theta}'_{\A} + \alpha\vec{\theta}'_{\D}}{\lVert (1-\alpha)\vec{\theta}'_{\A} + \alpha\vec{\theta}'_{\D} \rVert}.    
\end{equation*}
The rest of this proof will be a proof by cases. Two cases are possible: 
\begin{enumerate}
    \item $\vec{\theta}'_{\A} \in \Pi$;
    \item $\vec{\theta}'_{\A} \not\in \Pi$.
\end{enumerate}

First, consider the case in which $\vec{\theta}'_{\A} \in \Pi$. In order to obtain $\vec{\theta}_C = \vec{\theta}^*_\A \in \Pi$ given that $\vec{\theta}'_{\A} \in \Pi$,
$\vec{\theta}'_{\D}$ must be contained in $\Pi$ as well. Therefore,
if $\vec{\theta}'_{\A} \in \Pi$, then a pure strategy Nash equilibrium $(\vec{\theta}'_{\A},\vec{\theta}'_{\D})$ is also contained in $\Pi$.

Second, consider the case in which $\vec{\theta}'_{\A} \notin \Pi$.
We show that if $\vec{\theta}'_{\A} \notin \Pi, \exists \vec{\theta}_{\D}: u_\D(\vec{\theta}'_{\A}, \vec{\theta}_{\D}) > u_\D(\vec{\theta}'_{\A}, \vec{\theta}'_{\D})$, which in turn means that $(\vec{\theta}'_{\A},\vec{\theta}'_{\D})$ is \emph{not} an equilibrium.

Suppose, via contradiction, $(\vec{\theta}'_{\A},\vec{\theta}'_{\D})$ is an equilibrium even though $\vec{\theta}'_{\A} \notin \Pi$. Even in this case,
because at equilibrium $\vec{\theta}_C = \vec{\theta}^*_{\A}$,
it must be possible for player $D$ to report $\vec{\theta}'_{\D}$ such that adding it to $\vec{\theta}'_{\A}$ results in angular movement from $\vec{\theta}'_{\A}$ towards $\vec{\theta}^*_{\A}$ equal to $\measuredangle \left(\vec{\theta}'_{\A}, \vec{\theta}^*_{\A} \right)$ (in the 2-dimensional subspace spanned by $\vec{\theta}'_{\A}$ and $\vec{\theta}^*_{\A}$).
However, player $D$ can instead report $\vec{\theta}_{\D}$ to cause angular movement of at least $\measuredangle \left(\vec{\theta}'_{\A}, \vec{\theta}^*_{\A} \right)$ \emph{directly towards $\vec{\theta}^*_{\D}$} (in the 2-dimensional subspace spanned by $\vec{\theta}'_{\A}$ and $\vec{\theta}^*_{\D}$).

If $\measuredangle \left(\vec{\theta}'_{\A}, \vec{\theta}^*_{\A} \right) \geq \measuredangle \left(\vec{\theta}'_{\A}, \vec{\theta}^*_{\D} \right)$, then $\exists \vec{\theta}_{\D}: \vec{\theta}_C = \frac{(1-\alpha)\vec{\theta}'_{\A} + \alpha\vec{\theta}_{\D}}{\lVert (1-\alpha)\vec{\theta}'_{\A} + \alpha\vec{\theta}_{\D} \rVert} = \vec{\theta}^*_{\D}$. This in turn would maximize utility for player $\D$, meaning that $u_\D(\vec{\theta}'_{\A}, \vec{\theta}_{\D}) > u_\D(\vec{\theta}'_{\A}, \vec{\theta}'_{\D})$. 

If instead $\measuredangle \left(\vec{\theta}'_{\A}, \vec{\theta}^*_{\A} \right) < \measuredangle \left(\vec{\theta}'_{\A}, \vec{\theta}^*_{\D} \right)$, then $\exists \vec{\theta}_{\D}: \vec{\theta}_C = \frac{(1-\alpha)\vec{\theta}'_{\A} + \alpha\vec{\theta}_{\D}}{\lVert (1-\alpha)\vec{\theta}'_{\A} + \alpha\vec{\theta}_{\D} \rVert}; \measuredangle \left(\vec{\theta}_C, \vec{\theta}^*_{\D}\right) = \measuredangle \left(\vec{\theta}'_{\A}, \vec{\theta}^*_{\D} \right) - \measuredangle \left(\vec{\theta}'_{\A}, \vec{\theta}^*_{\A} \right)$.

The shortest distance between two points on a surface of a sphere is the length of an arc of the great circle whose edge contains the two points, meaning a spherical version of the triangle inequality supports the following:
\begin{align*}
    \measuredangle \left(\vec{\theta}'_{\A}, \vec{\theta}^*_{\A} \right) + \measuredangle \left(\vec{\theta}^*_{\A}, \vec{\theta}^*_{\D} \right) &> \measuredangle \left(\vec{\theta}'_{\A}, \vec{\theta}^*_{\D} \right),
    \\
    \measuredangle \left(\vec{\theta}^*_{\A}, \vec{\theta}^*_{\D} \right) &> \measuredangle \left(\vec{\theta}'_{\A}, \vec{\theta}^*_{\D} \right) -  \measuredangle \left(\vec{\theta}'_{\A}, \vec{\theta}^*_{\A} \right).
\end{align*}
The left-hand side of the inequality above is inversely proportional to $u_{\D}(\vec{\theta}'_{\A}, \vec{\theta}'_{\D})$, and the right-hand side of the inequality is inversely proportional to $u_{\D}(\vec{\theta}'_{\A}, \vec{\theta}_{\D})$.Therefore, $u_\D(\vec{\theta}'_{\A}, \vec{\theta}_{\D}) > u_\D(\vec{\theta}'_{\A}, \vec{\theta}'_{\D})$ under these conditions as well.

The existence of a strategy $\vec{\theta}_\D \neq \vec{\theta}'_\D$ such that $u_\D(\vec{\theta}'_{\A}, \vec{\theta}_{\D}) > u_\D(\vec{\theta}'_{\A}, \vec{\theta}'_{\D})$ indicates a contradiction in the claim that there exists a pure strategy Nash equilibrium whenever $\vec{\theta}'_\A \notin \Pi$. Therefore, if a pure strategy Nash equilibrium $(\vec{\theta}'_\A,\vec{\theta}'_\D)$ exists, it must reside in $\Pi$.
\end{proof}

\section{Full Proof of Lemma \ref{thm:any-consensus}}
\label{appendix:maj-agg-extended-pf}
\begin{proof}
If we have $\vec{\theta}_C = \vec{\theta}^*_\A$, then it must be the case that $\frac{\alpha \vec{\theta}_D + (1-\alpha) \vec{\theta}_A}{\lVert \alpha \vec{\theta}_D + (1-\alpha) \vec{\theta}_A \rVert} = \vec{\theta}^*_\A$. This expression is equivalent to:
\begin{equation}\label{eq:theta_A}
    (1 - \alpha) \vec{\theta}_\A = \lVert \alpha \vec{\theta}_\D + (1 - \alpha) \vec{\theta}_\A \rVert \vec{\theta}^*_\A - \alpha \vec{\theta}_\D.
\end{equation}
We denote the magnitude of $\vec{\theta}^*_\A$ by $L$, that is, $L= \lVert \alpha \vec{\theta}_\D + (1 - \alpha) \vec{\theta}_\A \rVert$. (Note that $L\geq 0$ because it is the norm of a vector.)
So equation ~\ref{eq:theta_A} can be written as:
\begin{equation}\label{eq:theta_A_2}
    (1 - \alpha) \vec{\theta}_\A = L \vec{\theta}^*_\A - \alpha \vec{\theta}_\D.
\end{equation}
Figure \ref{fig:any-consensus-f} shows an illustration of equation~\ref{eq:theta_A_2}.
\begin{figure}[h]
    \centering
    \includegraphics[width=0.6\textwidth]{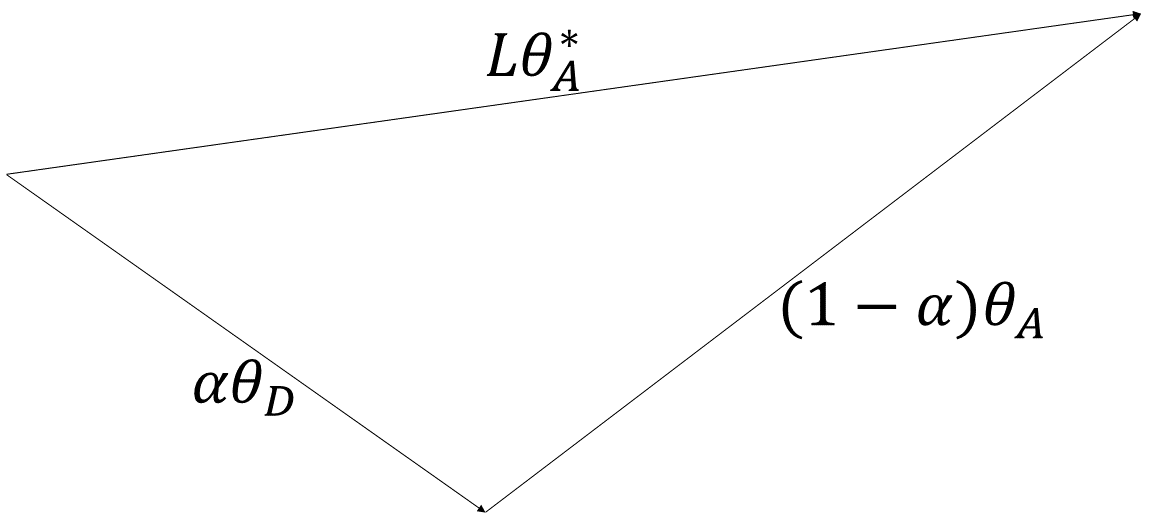}
    \caption{The illustration of $(1-\alpha) \vec{\theta}_\A = L\vec{\theta}^*_\A - \alpha \vec{\theta}_\D$. One can apply the Law of Cosines to solve for $L$.}
    \label{fig:any-consensus-f}
\end{figure}

Given that $\lVert \vec{\theta}_\A \rVert = \lVert \vec{\theta}^*_\A \rVert = \lVert \vec{\theta}_\D \rVert = 1$, after applying the Law of Cosines to equation~\ref{eq:theta_A_2}, we obtain:
\begin{equation}\label{eq:quadratic_L}
    (1 - \alpha)^2 = L^2 + \alpha^2  - 2\alpha L(\vec{\theta}_\D \cdot \vec{\theta}^*_\A),
\end{equation}
where $(\vec{\theta}_\D \cdot \vec{\theta}^*_\A)$ has been substituted for $\cos{\measuredangle \left(\vec{\theta}_\D, \vec{\theta}^*_\A \right)}$. Solving the above quadratic equation for $L$, we obtain: 
\begin{equation*}
    L = \alpha (\vec{\theta}_\D \cdot \vec{\theta}^*_\A) \pm \sqrt{\alpha^2 (\vec{\theta}_\D \cdot \vec{\theta}^*_\A)^2 - 2\alpha + 1}.
\end{equation*}
Note that $1-2\alpha > 0$, $\alpha (\vec{\theta}_\D \cdot \vec{\theta}^*_\A) < \sqrt{\alpha^2 (\vec{\theta}_\D \cdot \vec{\theta}^*_\A)^2 - 2\alpha + 1}$. Therefore $\alpha (\vec{\theta}_\D \cdot \vec{\theta}^*_\A) - \sqrt{\alpha^2 (\vec{\theta}_\D \cdot \vec{\theta}^*_\A)^2 - 2\alpha + 1} < 0$. This would be a contradiction with the fact that $L$ is positive. So it must be the case that:
\begin{equation*}
    L = \alpha (\vec{\theta}_\D \cdot \vec{\theta}^*_\A) + \sqrt{\alpha^2 (\vec{\theta}_\D \cdot \vec{\theta}^*_\A)^2 - 2\alpha + 1}.
\end{equation*}
Lastly, substituting $L$ in Equation~\ref{eq:theta_A_2} with the right hand side of the above equation 
results in
\begin{equation*}
    \vec{\theta}_\A = \frac{\bigg[\alpha (\vec{\theta}_\D \cdot \vec{\theta}^*_\A) + \sqrt{\alpha^2 (\vec{\theta}_\D \cdot \vec{\theta}^*_\A)^2 - 2\alpha + 1}\bigg]\vec{\theta}^*_\A - \alpha \vec{\theta}_\D}{(1 - \alpha)}.
\end{equation*}
\end{proof}

\section{Full Proof of Lemma \ref{lem:minority-ort}}
\label{appendix:dist-deriv}
\begin{proof}
As in Appendix \ref{appendix:maj-agg-extended-pf}, consider the vector sum $\alpha \vec{\theta}_\D + (1-\alpha) \vec{\theta}_\A = L\vec{\theta}_C$, where $L = \lVert \alpha \vec{\theta}_\D + (1 - \alpha) \vec{\theta}_\A \rVert$. We consider the general case where $\vec{\theta}_C$ does not necessarily equal $\vec{\theta}^*_\A$. Figure \ref{fig:minority-ort-appendix} shows an illustration of this sum.

\begin{figure}[h]
    \centering
    \includegraphics[width=0.6\textwidth]{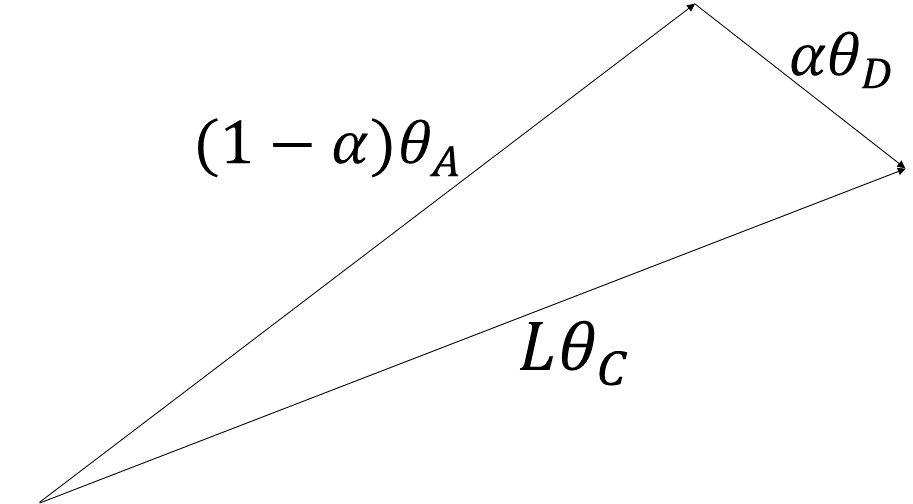}
    \caption{Illustration of $\alpha \vec{\theta}_\D + (1-\alpha) \vec{\theta}_\A = L\vec{\theta}_C$. One can apply the Law of Sines to solve for $\measuredangle \left(\vec{\theta}_C, \vec{\theta}_{\A}\right)$.}
    \label{fig:minority-ort-appendix}
\end{figure}

According to the Law of Sines, the following must be true:
\begin{equation*}
    \frac{\alpha}{\sin{(\measuredangle \left(\vec{\theta}_C, \vec{\theta}_{\A}\right)})} = \frac{1-\alpha}{\sin{(\measuredangle \left(\vec{\theta}_C, \vec{\theta}_{\D}\right)})}.
\end{equation*}
Therefore,
\begin{equation*}
    \measuredangle \left(\vec{\theta}_C, \vec{\theta}_{\A}\right) = \sin^{-1}{\left(\frac{\alpha \sin{(\measuredangle \left(\vec{\theta}_C, \vec{\theta}_{\D}\right)})}{1-\alpha}\right)}.
\end{equation*}

Now, three cases are possible:
(i) $0 \leq \measuredangle \left(\vec{\theta}_C, \vec{\theta}_\D \right) < \frac{\pi}{2}$, 
(ii) $\measuredangle \left(\vec{\theta}_C, \vec{\theta}_\D \right) = \frac{\pi}{2}$, and
(iii) $\frac{\pi}{2} < \measuredangle \left(\vec{\theta}_C, \vec{\theta}_\D \right) \leq \pi$. A visualization of each of these cases is shown in Figure \ref{fig:minority-ort}.

\begin{figure}[H]
    \centering
    \includegraphics[width=0.6\textwidth]{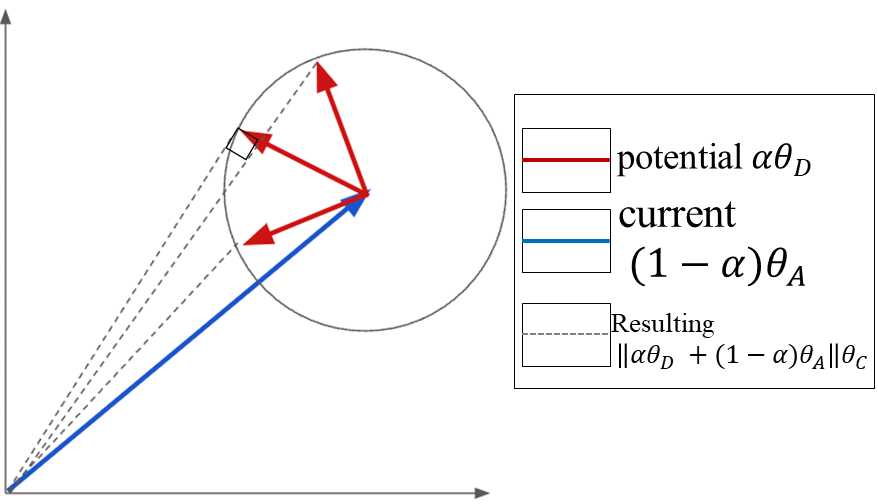}
    \caption{Illustrations of $\alpha \vec{\theta}_\D + (1-\alpha)\vec{\theta}_\A = \lVert \alpha \vec{\theta}_D + (1-\alpha) \vec{\theta}_A \rVert \theta_C$ for various $\alpha \vec{\theta}_\D$ candidates (red) and results (dashed) relative to $(1-\alpha)\vec{\theta}_\A$ (blue). Maximum change in direction from $\vec{\theta}_A$ to $\vec{\theta}_C$ when $\vec{\theta}_\D \perp \vec{\theta}_C$.}
    \label{fig:minority-ort}
\end{figure}
 
 Note that in cases (i) and (iii), $\measuredangle \left(\vec{\theta}_C, \vec{\theta}_\A \right) = \sin^{-1}{\left(\frac{\alpha \sin{(\measuredangle \left(\vec{\theta}_C, \vec{\theta}_\D \right))}}{1-\alpha}\right)} < \sin^{-1}{\left( \frac{\alpha}{1-\alpha}\right)}$ because $\left(\frac{\alpha \sin{(\measuredangle \left(\vec{\theta}_C, \vec{\theta}_\D \right))}}{1-\alpha}\right) < \left(\frac{\alpha}{1-\alpha}\right)$ and because $\sin^{-1}(.)$ is a monotonic function in $[-1,1]$.
 
 In case (ii),  $\measuredangle \left(\vec{\theta}_C, \vec{\theta}_\A \right) = \sin^{-1}{\left(\frac{\alpha \sin{(\measuredangle \left(\vec{\theta}_C, \vec{\theta}_\D \right))}}{1-\alpha}\right)} = \sin^{-1}{\left(\frac{\alpha}{1-\alpha}\right)}$. Therefore, in all cases, $\measuredangle \left(\vec{\theta}_C, \vec{\theta}_\A \right) \leq \sin^{-1}{\left( \frac{\alpha}{1-\alpha}\right)}$. Equality is obtained if and only if $\vec{\theta}_\D \perp \vec{\theta}_C$.

\end{proof}

\section{Additional Lemma Proofs}
\label{appendix:corr-proofs}

\subsection{Proof of Lemma \ref{cor:maj-consensus}}
\begin{proof}
Suppose, by contradiction, there exists a pure strategy Nash equilibrium $(\vec{\theta}'_\D, \vec{\theta}'_\A)$ for $G$, and $\vec{\theta}_C \neq \vec{\theta}^*_{\A}$. By Lemma \ref{thm:any-consensus}, there must exist a $\vec{\theta}_\A$ given the {\dis} group's best response $\vec{\theta}'_\D$ such that $\vec{\theta}_C = \vec{\theta}_\A^*$. Note that such a $\vec{\theta}_{\A}$ maximizes the {\adv} group's utility. Therefore, $\vec{\theta}_\A$ produces a better outcome for the {\adv} group than $\vec{\theta}'_\A$. However, this raises a contradiction because if $(\vec{\theta}'_\D, \vec{\theta}'_\A)$ is a pure strategy Nash equilibrium, $\vec{\theta}'_\A$ must be a best response, but $\vec{\theta}_\A$ is a better response. Thus, for every pure strategy Nash equilibrium $(\vec{\theta}'_\D, \vec{\theta}'_\A)$, $\vec{\theta}_C = \vec{\theta}^*_{\A}$.

\end{proof}

\subsection{Proof of Lemma \ref{cor:min-ub}}
\begin{proof}
According to Lemma \ref{lem:minority-ort}, 
the {\dis} group can pull the aggregate away from $\vec{\theta}_\A$ by $\sin^{-1}{\left(\frac{\alpha}{1-\alpha}\right)}$. In this case, we have $\measuredangle \left(\vec{\theta}_C, \vec{\theta}^*_{\D}\right) = \left(\vec{\theta}_A, \vec{\theta}^*_{\D}\right) - \sin^{-1}\left(\frac{\alpha}{1-\alpha}\right)$.
By definition, the largest possible value for $\measuredangle \left(\vec{\theta}_{\A}, \vec{\theta}^*_{\D}\right)$ is $\pi$. Therefore, we have 
$$\measuredangle \left(\vec{\theta}_C, \vec{\theta}^*_{\D}\right) \leq \pi - \sin^{-1}\left(\frac{\alpha}{1-\alpha}\right).$$
\end{proof}

\subsection{Proof of Lemma \ref{lem:min-orth-eq}}
\begin{proof}
Suppose, by contradiction, there \textit{is} a pure strategy Nash equilibrium where the {\dis} group's best response $\vec{\theta}'_{\D}$ is \textit{not} orthogonal to $\vec{\theta}_C$.
Consider the {\adv} group's best response $\vec{\theta}'_{\A}$. By Lemma \ref{lem:minority-ort}, the most the {\dis} can move the aggregate from $\vec{\theta}'_{\A}$ is by $\sin^{-1}\left(\frac{\alpha}{1-\alpha}\right)$. Therefore, it is possible to obtain any aggregate vector in the cone centered at $\vec{\theta}'_{\A}$ with apex angle $2\sin^{-1}\left(\frac{\alpha}{1-\alpha}\right)$, including the vectors located on its borders. Note that according to Lemma \ref{lem:minority-ort}, it is only possible to obtain the aggregate vectors on the borders of this cone if $\vec{\theta}'_{\D} \perp \vec{\theta}_C$.  Depending on the location of $\vec{\theta}^*_{\D}$ relative to this cone, there are three cases:
\begin{enumerate}
    \item $\vec{\theta}^*_{\D}$ is strictly inside of this cone;
    \item $\vec{\theta}^*_{\D}$ is on a border of this cone;
    \item $\vec{\theta}^*_{\D}$ is outside of this cone.
\end{enumerate}
See Figure \ref{fig:lemma_5_v2} for a visualization of each of these cases.

In all cases that follow, $\vec{\theta}_C = \vec{\theta}^*_{\A}$ in this pure strategy Nash equilibrium according to Lemma~\ref{cor:maj-consensus}. Additionally, note that if $\vec{\theta}_C$ were on a border of this cone, $\vec{\theta}'_{\D}$ would be orthogonal to $\vec{\theta}_C$ according to Lemma \ref{lem:minority-ort}. However, this would be a contradiction with the given assumption. Therefore, $\vec{\theta}_C = \vec{\theta}^*_{\A}$ is strictly inside of this cone.

\begin{figure}
    \centering
    \includegraphics[width=0.95\columnwidth]{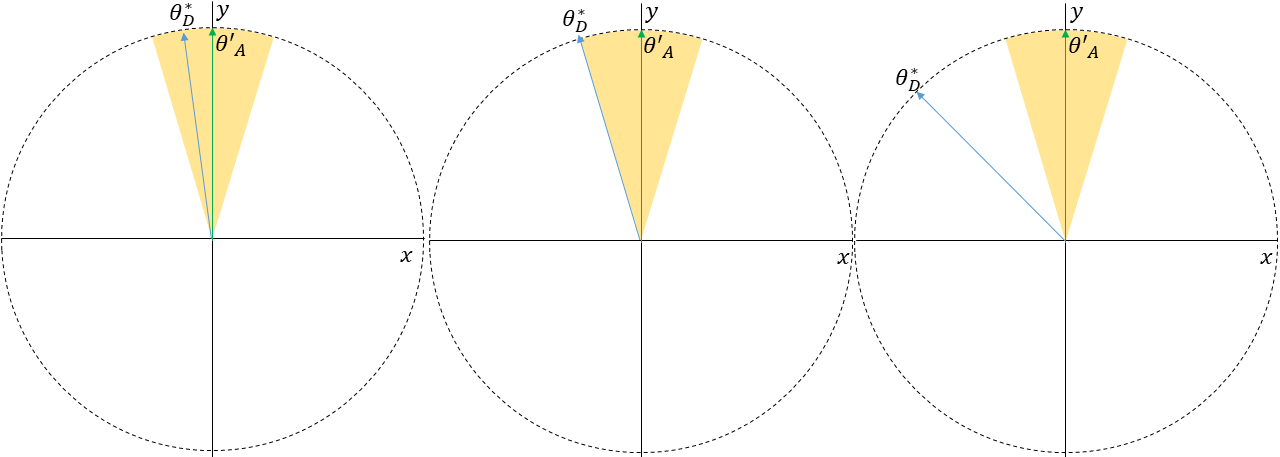}
    \caption{Illustrations of various $\vec{\theta}^*_{\D}$ relative to $\vec{\theta}'_{\A}$ and cone of possible aggregate vectors (yellow) with central angle $2\sin^{-1}\left(\frac{\alpha}{1-\alpha}\right)$.}
    \label{fig:lemma_5_v2}
\end{figure}

First, consider the case in which $\vec{\theta}^*_{\D}$ is strictly inside of this cone. This means that $\exists \vec{\theta}_{\D}: \vec{\hat{\theta}}_C = \frac{(1-\alpha)\vec{\theta}'_{\A} + \alpha\vec{\theta}_{\D}}{\lVert (1-\alpha)\vec{\theta}'_{\A} + \alpha\vec{\theta}_{\D} \rVert} = \vec{\theta}^*_{\D}$, where $\vec{\hat{\theta}}_C$ is the aggregate obtained from $\vec{\theta}_{\D}$ and $\vec{\theta}'_{\A}$. As mentioned in Section \ref{sec:setup} we also assume that $\vec{\theta}^*_{\A} \neq \vec{\theta}^*_{\D}$. Therefore, such a $\vec{\theta}_{\D}$ would be a better response than $\vec{\theta}'_{\D}$. However, this raises a contradiction because $\vec{\theta}'_{\D}$ is supposed to be a best response, but reporting $\vec{\theta}_\D$ such that $\vec{\hat{\theta}}_C = \vec{\theta}^*_{\D}$ would be a better response. This means that such a pure strategy Nash equilibrium cannot exist in this case.

Second, consider the case in which $\vec{\theta}^*_{\D}$ is on a border of this cone. In this case, $\exists \vec{\theta}_{\D}: \vec{\theta}_{\D} \perp \vec{\hat{\theta}}_C; \vec{\hat{\theta}}_C = \frac{(1-\alpha)\vec{\theta}'_{\A} + \alpha\vec{\theta}_{\D}}{\lVert (1-\alpha)\vec{\theta}'_{\A} + \alpha\vec{\theta}_{\D} \rVert} = \vec{\theta}^*_{\D}$, where $\vec{\hat{\theta}}_C$ is the aggregate obtained from $\vec{\theta}_{\D}$ and $\vec{\theta}'_{\A}$. This also raises a contradiction because there exists a better response than $\vec{\theta}'_{\D}$ (that obtains $\vec{\hat{\theta}}_C = \vec{\theta}^*_{\D}$) for the {\dis} group that \textit{is} orthogonal to the aggregate vector $\vec{\hat{\theta}}_C$ that results. In turn, such a pure strategy Nash equilibrium cannot exist in this case either.

Lastly, consider the case in which $\vec{\theta}^*_{\D}$ is outside of this cone. In this case, $\exists \vec{\theta}_{\D}: \vec{\theta}_{\D} \perp \vec{\hat{\theta}}_C; \vec{\hat{\theta}}_C = \frac{(1-\alpha)\vec{\theta}'_{\A} + \alpha\vec{\theta}_{\D}}{\lVert (1-\alpha)\vec{\theta}'_{\A} + \alpha\vec{\theta}_{\D} \rVert}$, where $\vec{\hat{\theta}}_C$ is the aggregate obtained from $\vec{\theta}_{\D}$ and $\vec{\theta}'_{\A}$, and $\vec{\hat{\theta}}_C$ is on a border of the cone. More specifically, there are two such $\vec{\hat{\theta}}_C$ because there are two borders. Recall regarding the original aggregate that $\vec{\theta}_C = \vec{\theta}^*_{\A}$. Moreover, $\vec{\theta}_C = \vec{\theta}^*_{\A}$ is strictly inside of this cone. As a result, one of the borders will be closer to $\vec{\theta}^*_{\D}$ than $\vec{\theta}^*_{\A}$. Therefore, at least one $\vec{\hat{\theta}}_C$ will satisfy $\vec{\hat{\theta}}_C \cdot \vec{\theta}^*_\D > \vec{\theta}^*_{\A} \cdot \vec{\theta}^*_{\D}$. This raises a contradiction as well because there exists a better response than $\vec{\theta}'_{\D}$ for the {\dis} group that \textit{is} orthogonal to the aggregate vector $\vec{\hat{\theta}}_C$ that results. It is therefore evident that such a pure strategy Nash equilibrium cannot exist in this case as well.

Thus, via proof by contradiction and cases, in any pure strategy Nash equilibrium, the {\dis} group's best response is orthogonal to $\vec{\theta}_C$. Moreover, in order to obtain $\vec{\theta}_C = \vec{\theta}^*_{\A}$ according to Lemma \ref{cor:maj-consensus} with $\vec{\theta}'_\D \perp \vec{\theta}_{C}$, $\vec{\theta}'_{\A}$ will offset their response from their true preferences $\vec{\theta}^*_{\A}$ by $\sin^{-1}\left(\frac{\alpha}{1-\alpha}\right)$.
\end{proof}

\section{Algebraic Simplification for Proof of Theorem \ref{thm:nondegen-eq}}
\label{appendix:eq-alg}

Equation \ref{eq:maj-group-any-direction} is given by
\begin{equation*}
    \vec{\theta}_\A = \frac{\bigg[\alpha (\vec{\theta}_\D \cdot \vec{\theta}^*_\A) + \sqrt{\alpha^2 (\vec{\theta}_\D \cdot \vec{\theta}^*_\A)^2 - 2\alpha + 1}\bigg]\vec{\theta}^*_\A - \alpha \vec{\theta}_\D}{(1 - \alpha)}.
\end{equation*}
Whenever $\vec{\theta}_\D \perp \vec{\theta}_C$,
\begin{align*}
    \alpha (\vec{\theta}_\D \cdot \vec{\theta}^*_\A) + \sqrt{\alpha^2 (\vec{\theta}_\D \cdot \vec{\theta}^*_\A)^2 - 2\alpha + 1} &= \lVert \alpha \vec{\theta}_\D + (1 - \alpha) \vec{\theta}_\A \rVert \\
    &= \sqrt{(1-\alpha)^2 - \alpha^2} \\
    &= \sqrt{1 - 2\alpha}.
\end{align*}
$\vec{\theta}_\D \perp \vec{\theta}_C$ in this case because $\vec{\theta}_\D = \vec{\theta}'_\D = \sign(\vec{\theta}^*_\D \cdot R \vec{\theta}^*_{\A})R \vec{\theta}^*_\A$ and $\vec{\theta}^*_\A = \vec{\theta}_C$. Substitution of values into Equation \ref{eq:maj-group-any-direction} yields
\begin{align*}
    \vec{\theta}'_\A &= \frac{\left(\sqrt{1 - 2\alpha}\right)\vec{\theta}^*_\A - \alpha \sign(\vec{\theta}^*_\D \cdot R \vec{\theta}^*_{\A})R \vec{\theta}^*_\A}{(1 - \alpha)}, \\
    &= \frac{\big[(\sqrt{1 - 2\alpha})I - \alpha \sign(\vec{\theta}^*_\D \cdot R \vec{\theta}^*_{\A})R\big]\vec{\theta}^*_\A}{1 - \alpha}.
\end{align*}

\end{document}